\documentclass[
aps,
prb,
reprint,twocolumn,floats,floatfix,
nofootinbib]{revtex4-2}

\usepackage{lipsum}
\usepackage{graphicx}
\usepackage{dcolumn}
\usepackage{bm}
\usepackage{amsmath}
\usepackage{amsfonts}
\usepackage{amssymb}
\usepackage{amstext}
\usepackage[english]{babel}
\usepackage{helvet}
\usepackage{microtype}
\usepackage[pdftex,hidelinks]{hyperref}
\hypersetup{
    colorlinks=true,
    linkcolor=blue,
    filecolor=blue,
    urlcolor=blue,
    citecolor = blue,
}
\usepackage[export]{adjustbox}
\usepackage[caption=false]{subfig}
\usepackage{xcolor}
\usepackage{amsthm}
\usepackage{mathtools}
\usepackage{bbold}
\usepackage{float}
\usepackage{enumerate}

\graphicspath{{./images/}}

\newcommand{\ket}[1]{|#1\rangle}
\newcommand{\bra}[1]{\langle#1|}
\newcommand{\braket}[2]{\langle#1|#2\rangle}
\newcommand{\average}[1]{\langle #1 \rangle}
\DeclareMathOperator{\sign}{sign}
\DeclareMathOperator{\diag}{diag}
\DeclarePairedDelimiter\floor{\lfloor}{\rfloor}

\newtheorem{theorem}{Theorem}

\theoremstyle{remark}
\newtheorem*{remark}{Remark}

\begin{document}


\title{Lattice and $\mathcal{PT}$ symmetries in tensor-network renormalization group:\\ Case study of a hard-square lattice gas model}

\author{Xinliang Lyu}
\email[]{xlyu@ihes.fr}
\affiliation{
    Institut des Hautes \'Etudes Scientifiques,
    91440 Bures-sur-Yvette, France
}


\date{July 12, 2026}

\begin{abstract}
The tensor-network renormalization group (TNRG) is an accurate numerical real-space renormalization group method for studying phase transitions in both quantum and classical systems.
Continuous phase transitions, as an important class of phase transitions, are usually accompanied by spontaneous breaking of various symmetries.
However, the understanding of symmetries in the TNRG is well-established mainly for global on-site symmetries like $U(1)$ and $SU(2)$.
In this paper, we demonstrate how to incorporate lattice symmetries (including reflection and rotation) and the $\mathcal{PT}$ symmetry in the TNRG in two dimensions (2D) through a case study of the hard-square lattice gas with nearest-neighbor exclusion.
This model is chosen because it is well-understood and has two continuous phase transitions whose spontaneously-broken symmetries are lattice and $\mathcal{PT}$ symmetries.
Specifically, we write down proper definitions of these symmetries in a coarse-grained tensor network and propose a TNRG scheme that incorporates these symmetries.
We demonstrate the validity of the proposed method by estimating the critical parameters and the scaling dimensions of the two phase transitions of the model.
The technical development in this paper has made the 2D TNRG a more well-rounded numerical method.
\end{abstract}

\maketitle


\section{Introduction}
Tensor-network renormalization group (TNRG) is a versatile modern formulation of the real-space renormalization group (RG) for both classical and quantum models~\cite{Levin:Nave:2007,Xie:2012:hotrg,Evenbly:2017algo,Okunishi:2022review}.
As a powerful numerical method, it can be used for evaluating the free energy of a classical statistical system and the expectation value of observables of a quantum system.
Understood in the Wilsonian RG framework, the TNRG generates RG flows in the space of tensors that reveal novel phases of matter~\cite{Gu:Wen:2009}.
At criticality, the coarse-grained tensors that flow near the critical fixed point can be used to extract scaling dimensions and operator product expansion (OPE) coefficients of the universality class~\cite{Gu:Wen:2009,Evenbly:2016local, Lyu:Xu:Kawashima:2021,Guo:Wei:2024}.
Apart from being a numerical tool, the TNRG is also becoming a rigorous method for studying the stability of various RG fixed points~\cite{Kennedy:2022,Kennedy:2023,Ebel:2024-3D,Ebel:2025-meets}.
In this paper, we focus on the TNRG as a numerical real-space RG method applying to the partition function of classical statistical systems.

In numerical calculations, it is preferable to incorporate the symmetries of a physical system in its tensor-network representation and to preserve them in TNRG manipulations~\cite{Singh:2010,Singh:2011,Singh:2012,Lyu:Kawashima:2025-3D,Lyu:Kawashima:2025:reflsym}.
This can reduce the computational time since there are fewer degrees of freedom due to the constraints of the symmetries on the tensors.
Furthermore, since the space of the RG flows is constrained after incorporating the symmetries, all the local operators that do not respect the symmetries are eliminated in numerical RG; this can improve the quality of the RG flows and facilitate the numerical search of a critical fixed-point tensor~\cite{Lyu:Kawashima:2025-3D,Ebel:2025:newton}.

The study of the symmetries in TNRG as a numerical real-space RG method is unavoidably influenced by the benchmark model used to gauge the behavior and efficiency of the method.
In two dimensions (2D), the benchmark model has always been the 2D Ising model since it is exactly solved.
The symmetry that is spontaneously broken in the 2D Ising transition is the global on-site spin-flip $\mathbb{Z}_2$ symmetry.
This might be one of the reasons why such global on-site symmetries like $U(1)$ and $SU(2)$~\cite{Singh:2010,Singh:2011,Singh:2012} were quickly established in tensor network decomposition\footnote{
$\mathbb{Z}_N$ symmetry is closely related to $U(1)$ symmetry.
} a few years after the birth of the first TNRG scheme, tensor network renormalization (TRG)~\cite{Levin:Nave:2007}.
However, the first detailed study of the lattice symmetries in the TNRG had not appeared until almost two decades later~\cite{Lyu:Kawashima:2025:reflsym}, where only the lattice-reflection symmetry was discussed.
We suspect that this lack of motivation for understanding other symmetries in the TNRG is because the 2D Ising model is used exclusively as the benchmark model, and a mild break of lattice symmetries in numerical calculation does not cause any problem for the RG flows and the accuracy of the numerical estimation.
Using other models as benchmarks in the development of the TNRG can stimulate the study of other symmetries, making the TNRG a more general and versatile numerical RG method.

Hard-core lattice gas models are systems of particles moving on a lattice with a hard-core interaction~\cite{Dhar:2023}.
These models are important for understanding the phase transitions from a fluid-like phase to a solid-like ordered phase, which is one of the most familiar phase transitions seen in everyday life.
When the range of the hard-core interaction increases, the phase transitions of these models become richer and more intricate.
For example, a naive symmetry argument predicts that the model with the next-nearest-neighbor (2NN) exclusion on a square lattice should have a critical point belonging to the 4-state Potts model~\cite{Dhar:2018talk};
yet, the critical exponents found in the Monte Carlo simulations are close to those of the 2D Ising model~\cite{Fernandes:Heitor:2007}.
This confusion was settled later when people noticed a sliding instability and columnar order in the high-density phase of the model~\cite{Ramola:2012-hizexpansion}.
This transition is located on the critical line of the Ashkin-Teller criticality connecting the Ising and the 4-state Potts transitions and is closer to the Ising transition~\cite{Ramola:2012-phd,Ramola:2015-columnar}.

The main numerical tools for studying these hard-core lattice gas models are Monte Carlo simulations and the transfer matrix method~\cite{Dhar:2023}.
The previous TNRG studies of such models~\cite{Akimenko:2019,Akimenko:2023} focus on calculating the density of the gas by evaluating the partition function using the TRG, without incorporating the symmetries of the models.
In this paper, we will make the first step of unleashing the full power of the TNRG as an efficient real-space RG method for studying hard-core lattice gas.
The emphasis is on incorporating an advanced TNRG technique called entanglement filtering (EF)~\cite{Gu:Wen:2009,Evenbly:2015tnr,Evenbly:2017algo,loop-TNR:2017,Bal:2017,Evenbly:2018:fet,Hauru:2018,Harada:2018,Lee:Kawashima:2020ring,Homma:2024} in the simple TRG while preserving important symmetries of the model that are previously not well understood in the TNRG.
The EF is essential for applying the TNRG to a system at its criticality; it reduces the RG truncation errors and makes the tensor stable near a critical fixed point~\cite{Evenbly:2015tnr,Hauru:2018}.

Specifically, we use the hard-square lattice gas with nearest-neighbor exclusion (1NN) to guide the understanding of lattice-reflection, lattice-rotation, and $\mathcal{PT}$ symmetries~\cite{Meisinger:2013,Bender:Boettcher:1998,Bender:Hook:2024} in the TNRG.
The 1NN hard-square lattice gas is the simplest hard-core lattice gas model on a square lattice.
As will be explained in Sec.~\ref{sec:model}, this model has two phase transitions---one is physical and the other nonphysical.
The physical transition is accompanied by a spontaneous breaking of lattice symmetries, while for the nonphysical one, the $\mathcal{PT}$ symmetry is broken.
Although this model is not exactly solved, the most accurate estimates of the critical parameters of its two phase transitions have more than ten significant digits~\cite{Todo:Suzuki:1996,Guo:Blote:2002,Todo:1999}.
Moreover, the two transitions are known to belong to the 2D Ising class and the Yang-Lee edge singularity, whose universal data, like scaling dimensions, are exactly known.
Therefore, it is a good benchmark model.

The remaining part of the paper is organized as follows.
In Sec.~\ref{sec:model}, we introduce the 1NN hard-square lattice gas model and construct a tensor-network representation of its partition function.
In order to make the lattice-reflection, lattice-rotation and $\mathcal{PT}$ symmetries of the model manifest, the resultant tensor network has copies of a nontrivial bond matrix on the bonds of the tensor network.
Moreover, to demonstrate the importance of incorporating these symmetries in the TNRG, we conduct numerical experiments to study the stability of the RG flows near the spontaneous symmetry breaking (SSB) fixed points using TNRG maps with different symmetry properties.
In Sec.~\ref{sec:TNRG}, we write down a proper definition of these symmetries in a coarse-grained tensor network and propose an EF-enhanced TNRG map that both preserves and imposes these symmetries.
The proposed scheme is based on the TRG and its key ingredient for incorporating the lattice symmetries is to understand how to perform the singular value decomposition (SVD) splitting of a tensor in a way that does not hide these lattice symmetries.
Enhanced by an EF called loop optimization~\cite{loop-TNR:2017,Homma:2024}, our scheme is a generalization of the symmetric loop-TNR in Ref.~\cite{loop-TNR:2017}.
We demonstrate the efficiency of the proposed method by estimating the critical parameter and scaling dimensions in the two phase transitions of the model in Sec.~\ref{sec:numres}.

\section{The model and its tensor-network representation\label{sec:model}}
We present a tensor-network representation for the partition function of the hard-square model with nearest-neighbor exclusion (1NN).
The phase transition of this model occurs when the activity, which controls the density of particles, changes.
There are two known transition points: one at positive activity and the other at negative.
We emphasize the relevant symmetries associated with these two transition points and how these symmetries manifest themselves in the tensor-network representation of the model.
Numerical experiments are conducted to demonstrate that incorporating these symmetries in an RG map should improve the stability of the fixed point corresponding to the spontaneous-symmetry-breaking (SSB) phase.

\subsection{The known phase transitions of the model\label{sec:model:transitions}}
The hard-square lattice gas model describes particles populating the sites of a square lattice in two dimensions (2D).
Each site can only hold $0$ or $1$ particle.
The only interaction of this system is the nearest-neighbor exclusion (1NN): no two particles are adjacent to each other.
Due to this infinite interaction, temperature dependence is absent.
Hence, the only thermodynamic state variable of the system is the density of the particles: $\rho = \average{n} / N$, where $\average{n}$ is the average number of particles and $N$ the number of lattice sites.
Due to the 1NN interaction, the square lattice is naturally divided into two sublattices like a chessboard shown in Fig.~\ref{fig:model-chessboard}(a).

\begin{figure}[tb]
\subfloat[]{
    \includegraphics[width=0.45\columnwidth,
    valign=c]{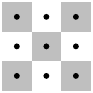}
}
~
\subfloat[]{
    \includegraphics[width=0.45\columnwidth,
    valign=c]{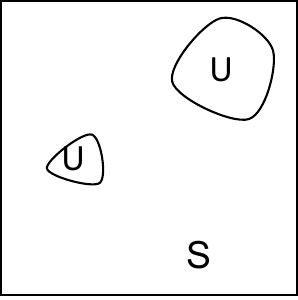}
}
\caption{\label{fig:model-chessboard}
        (a) Black dots represent sites of the square lattice, and they are divided into two sublattices like a chessboard.
        If a particle is in the central shaded site, it excludes other particles from its four neighbor unshaded sites.
        (b) A large-length-scale picture of the 1NN hard-square model at large activity $z$.
        ``S'' denotes the area where particles occupy the shaded sublattice, while ``U'' is the area where particles occupy the unshaded sublattice.
    }
\end{figure}

The system is described by the following grand canonical partition function~\cite{Baxter:1999}:
\begin{align}
    \label{eq:model:Zdef}
    \mathcal{Z}_N(z) = \sum_{n=0}^N g(n,N) z^n,
\end{align}
where $g(n,N)$ is the number of ways of putting $n$ particles on a square lattice with $N$ sites and the activity $z$ is the probability weight of having one particle in the system.
In the thermodynamic limit, the free energy of the system can be defined to be
\begin{align}
    \label{eq:model:fdef}
    f(z) = \lim_{N \to \infty}
    \frac{\log \mathcal{Z}_N(z)}{N}.
\end{align}
In order for the partition function $\mathcal{Z}_N(z)$ and free energy $f(z)$ to have a clear physical picture, the activity $z$ should be non-negative.
However, if one understands phase transition in terms of distribution of zeros of $\mathcal{Z}_N(z)$ on the complex $z$ plane as per the Yang-Lee theory of phase transition~\cite{Yang:Lee:1952a,Yang:Lee:1952b}, or as singular behavior of the free energy $f(z)$, it also makes sense to consider negative activity $z<0$.

For positive activity $z$, this model undergoes a second-order phase transition from a disordered fluid-like phase to an ordered solid where one sublattice has more particles than the other.
It is known that this transition belongs to the 2D Ising universality class, which can be understood using the following argument~\cite{Dhar:2018talk}.
At small $z$ near zero, the density of the particle is low, so the system is a nearly-free gas;
two sublattices have the same density of particles in this phase.
When $z \to +\infty$, the density goes to the maximal value $\rho=1/2$, all particles occupy either one or the other sublattice in Fig.~\ref{fig:model-chessboard}(a).
In this high-density phase, the density of the particles $\rho$ will be different for two sublattices.
Suppose all particles occupy the shaded sublattice at $z = + \infty$.
When $z$ decreases to a finite but large value, fluctuations will create regions of the lattice where particles populate the unshaded sublattice.
In a large length scale, the physical picture of the system looks like a sea of particles at the shaded sublattice, with small islands where particles populate the unshaded sublattice (see Fig.~\ref{fig:model-chessboard}(b)).
This is exactly the same physical picture as the Ising model when the temperature increases from zero temperature, where the two competing ground states are spin-up and spin-down areas.
This argument has been checked in a Monte Carlo simulation using data collapse~\cite{Fernandes:Heitor:2007}.
The critical activity $z_c^+ \approx 3.796255$, as well as the two relevant scaling dimensions, has been estimated to a very high accuracy (with 14 significant digits) by a finite-size analysis of transfer matrix calculations~\cite{Todo:Suzuki:1996,Guo:Blote:2002}.

It is known that the activity expansion of the free energy $f(z)$ has its radius of convergence dominated by a singularity on the negative-$z$ axis due to the alternating sign of the coefficients in the series~\cite{Groeneveld:1962}; 
the location of this singularity $z_c^- \approx -0.119389$ has also been accurately estimated with 10 significant digits using numerical diagonalization of transfer matrices~\cite{Todo:1999} and a series expansion from corner transfer matrix renormalization group~\cite{Chan:2012,Jensen:2012}.
Although a negative value of activity is nonphysical, the singular behavior of the free energy $f(z) \sim (z-z_c^-)^{\phi}$ near $z_c^-$ is universal for particles with repulsive interactions, with the exponent $\phi$ depending only on the dimensionality $d$ of the system~\cite{Poland:1984,Baram:Luban:1987}.
Therefore, this non-physical singularity is known as the \emph{repulsive-core singularity}.
At $d=2$, the value is known exactly to be $\phi(d=2)=5/6$ from Baxter's exact solution of the hard-hexagon lattice gas~\cite{Baxter:1980}.

This repulsive-core singularity can be understood in the framework of Yang-Lee theory of phase transition~\cite{Yang:Lee:1952a,Yang:Lee:1952b}, where the singular behavior of $f(z)$ is determined by the distribution of zeros of the partition function $\mathcal{Z}_N(z)$ on the complex $z$ plane in the thermodynamical limit.
The zeros of the partition function are expected to form lines on the complex $z$ plane, and the density of these zeros diverges like $(s-s_e)^{\sigma}$ at the edges of the lines, where $s$ is some parametrization along a line and $s_e$ is one edge of the line~\cite{Kortman:Griffiths:1971}.
This singularity of zeros, known as \emph{Yang-Lee edge singularity}, is also universal and only depends on system dimensionality~$d$~\cite{Fisher:1978}.
The scaling of this singularity is described by a scalar field theory with $\varphi^3$ interaction and a purely imaginary coupling.
Later, Fisher and Lai~\cite{Lai:Fisher:1995}, by comparing the values of $\phi$ and $\sigma+1$ in different $d$, realized that the repulsive-core singularity is identical to Yang-Lee edge singularity, and the two exponents are related to each other by $\phi = \sigma + 1$.
Recent numerical study of the partition function zeros of the 1NN hard-square model has clearly shown the universal hard-core line on the negative-$z$ axis~\cite{Assis:2014}.
The Yang-Lee edge singularity corresponds to the simplest non-unitary Conformal Field Theory (CFT), the $M(2,5)$ minimal model~\cite{Cardy:1985}.
This minimal model has only one relevant field with scaling dimension $x_{\varphi}=-2/5$, which is related to the critical exponent according to $\sigma = x_{\varphi} / (2 - x_{\varphi})$~\cite{Fisher:1978}.
Cardy has recently written a review about the history of this topic~\cite{Cardy:2024}.

\subsection{The relevant symmetries of the two phase transitions\label{sec:model:symPhase}}
\begin{figure}[tb]
\centering
\includegraphics[scale=1.0]{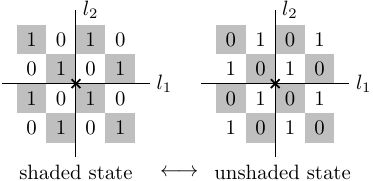}
\caption{\label{fig:model-zpground}
    SSB of lattice symmetries for positive-$z$ transition.
    Number $0, 1$ on the site means number of particle.
    When the activity $z=+\infty$, all particles occupy either the shaded (left) or unshaded (right) sites.
    These two configurations can be mapped to each other through any of the following symmetry transformations: 
    1) translation by one lattice constant in any direction,
    2) $90^\circ$ lattice rotation around the central cross point, and
    3) lattice reflection along either the $l_1$ or the $l_2$ axis.
    }
\end{figure}


Phase transitions between ordered and disordered phases are often accompanied by spontaneous symmetry breaking (SSB)~\cite{Kardar:2007:book}.
In the ordered phase, typical configurations of the system, as an ensemble, break some symmetry of the Hamiltonian (energy functional) of the system.
More precisely, in classical statistical mechanics, SSB can be understood as the degeneracy of the eigenstates (ground states) of the transfer matrix with the largest eigenvalue under the thermodynamical limit.
In the disordered phase, the ground state is unique. 
In ordered phase, the ground states are degenerate and they are mapped into each other by some \emph{global} symmetry operations of the transfer matrix\footnote{The basis states of the ground-state subspace are not connected by a finite number of local operations.
}.
The system will then stay in one of the ground states according to its initial or boundary condition and thus breaks the global symmetry of the transfer matrix.
The classical example of SSB is the second-order phase transition of the nearest-neighbor interaction Ising model, where the low-temperature phase breaks the global spin-flip $\mathbb{Z}_2$ symmetry of its Hamiltonian.

To study the RG flows of a model using TNRG, it is important to incorporate into the RG transformation the symmetry relevant to SSB of a phase transition.
Without incorporating the symmetry, the fixed point corresponding to the ordered phase becomes unstable under the RG transformation.
The numerical error due to artifacts of the RG scheme or the machine precision would break the degeneracy of the ordered-phase fixed point.
For example, the low-temperature fixed point of the 2D Ising model becomes unstable and eventually flows away to a fixed-point tensor with no degeneracy if the spin-flip $\mathbb{Z}_2$ symmetry is not incorporated in the TNRG~\cite{Gu:Wen:2009}.

For the positive-$z$ transition, the high-density ordered phase spontaneously breaks the symmetry between two sublattices.
For the hard-square model, there is no Hamiltonian.
A symmetry of the system means a set of symmetry transformations that bring an allowed configuration to another one with the same statistical weight.
In this sense, the system has all the lattice symmetries, including translation, rotation, and reflection.
The ordered phase can be understood as the breaking of these lattice symmetries, as is shown in Fig.~\ref{fig:model-zpground}.
These lattice symmetries can be made more explicit after introducing the tensor-network representation of this model.

The repulsive-core singularity at negative activity is related to SSB of the $\mathcal{PT}$ symmetry of the model~\cite{Bender:Boettcher:1998,Bender:Hook:2024,Dorey:2009}.
The $\mathcal{PT}$-symmetric models in statistical mechanics are a class of models whose partition function is real but can be negative~\cite{Meisinger:2013}.
Due to the lack of physical interpretation of negative probability, this symmetry is understood mathematically as the following symmetry property of the transfer matrix $T$ of the model:
\begin{align}
    \label{eq:model:TmatPTsym}
    \mathcal{P} T \mathcal{P} = T^*,
\end{align}
where the parity operator $\mathcal{P}$ is unitary and $\mathcal{P}^2=1$.
The above symmetry of the transfer matrix can also be written as $[\mathcal{PT}, T]=0$, with the time-reversal operator $\mathcal{T}$ implemented as complex conjugation.
Moreover, the parity and time reversal operators commute with each other $[\mathcal{P}, \mathcal{T}]=0$.
For the hard-square model with a negative activity $z<0$, its transfer matrix clearly can be constructed as a real matrix.
In this real representation, the parity operator is trivial: $\mathcal{P}=1$.
Due to this symmetry, the eigenvalues of the transfer matrix $T$ can either be real or appear as complex-conjugate pairs.

\begin{figure}[tb]
\centering
\includegraphics[width=1.0\columnwidth]{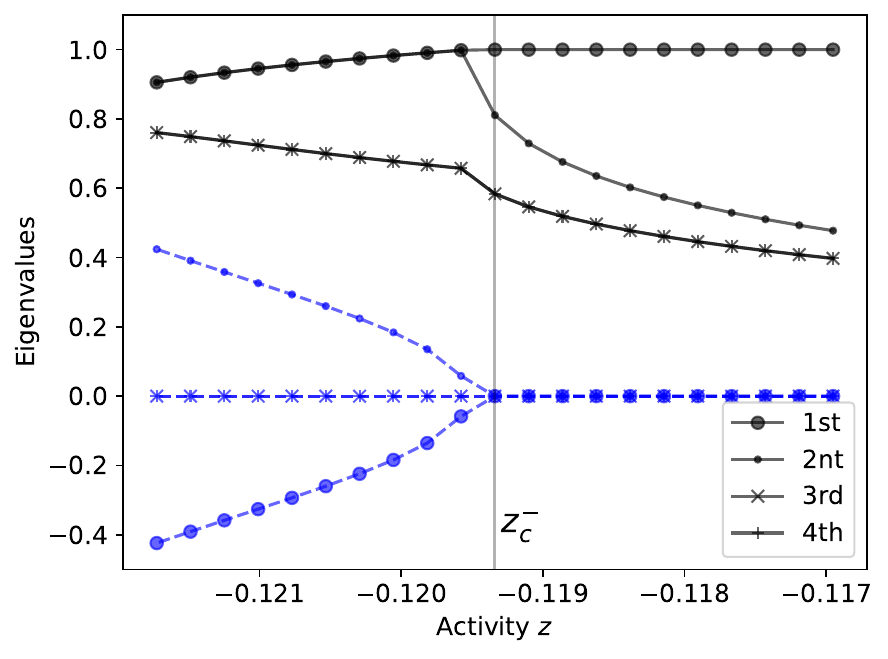}
\caption{\label{fig:model-TMspectz}
    The four largest eigenvalues of the transfer matrix of the 1NN hard-square model with size $L=12$ for activity near the repulsive-core singularity $z_c^{-}$.
    The largest eigenvalues are normalized to have unit norm.
    Black solid lines are real parts, and blue dashed lines are imaginary parts.
    When the activity $z$ decreases and goes over $z_c^{-}$, the first two eigenvalues develop imaginary parts and become a complex-conjugate pair, which is SSB of the $\mathcal{PT}$ symmetry.
        Notice that the third and fourth eigenvalues are degenerate, and thus they overlap each other in the plot.
    }
\end{figure}

At $z=0$, the only non-vanishing eigenvalue of the transfer matrix $T$ is 1.
When $z$ decreases to some negative value, the second-largest eigenvalue will cross with the largest one, after which their imaginary parts start to grow and they become a complex-conjugate pair (see Fig.~\ref{fig:model-TMspectz}).
The value of $z=z_c^-$ at which the first two largest eigenvalues cross is expected to be the Yang-Lee edge singularity~\cite{Todo:1999,Dhar:2018talk} in the thermodynamical limit\footnote{If the eigenvalue with the largest absolute value is unique, then the partition function cannot be zero.
Only after the degeneracy occurs is it possible for the partition function to be zero.
}.
For $z<z_c^{-}$, the two degenerate states with the largest eigenvalues are mapped to each other by the global $\mathcal{PT}$ symmetry operation.
In this sense, the $\mathcal{PT}$ symmetry is spontaneously broken.

\subsection{Tensor-network representation of the model and its symmetries}
There are various approaches for constructing the initial tensor network of the partition functions of hard-core lattice gas models~\cite{Akimenko:2019,Akimenko:2023}.
Since the emphasis of the current study is the RG flows of the model, which is sensitive to the symmetries incorporated into the RG map, we will make the relevant symmetries of the two transitions explicit in the initial tensor network.
As we shall see, the lattice symmetries that are relevant for the positive-$z$ transition correspond to lattice symmetries of the initial tensor network.
The $\mathcal{PT}$ can be most easily incorporated by demanding that the tensors should be real-valued.
The transfer matrix $T$, which can be constructed from the real-valued tensors, is also real, and the parity operator in Eq.~\eqref{eq:model:TmatPTsym} becomes trivial $\mathcal{P}=1$.

The basic convention in a tensor-network diagram is that if a tensor leg is shared by two tensors, this leg is summed over (or contracted).
Therefore, a full contraction of a tensor network can represent the partition function of a system.
We claim that the initial tensor network of the partition function in Eq.~\eqref{eq:model:Zdef} is
\begin{align}
    \label{eq:model:initialTN}
    &\mathcal{Z}_N(z) = \mathcal{Z}(A_0(z), \sigma_0;L_x, L_y) \nonumber\\
    & \equiv
    \includegraphics[scale=1.0, valign=c]{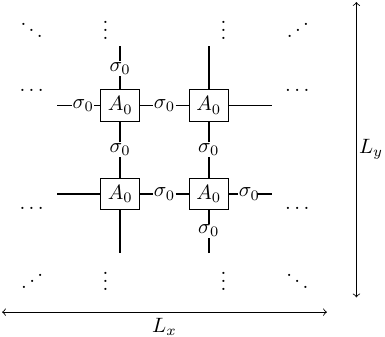}.
\end{align}
    In this tensor-network representation, the square lattice formed by the tensor $A_0$ is tilted by $45^\circ$ from the original square lattice formed by the sites of the lattice gas in Fig.~\ref{fig:model-chessboard}(a).
    The sites of the lattice gas are located at the position of the bond matrix $\sigma_0$.
    In Eq.~\eqref{eq:model:initialTN}, the matrix $\sigma_0$ is placed on the bonds extending outwards in such a way that when this basic $2 \times 2$ block is repeated in the space, there is one bond matrix on each bond.

The $4$-leg tensor $A_0$ in Eq.~\eqref{eq:model:initialTN} is
\begin{subequations}
\label{eqsub:model:A0}
\begin{align}
    \label{eq:model:A0}
    \includegraphics[scale=1.0, valign=c]{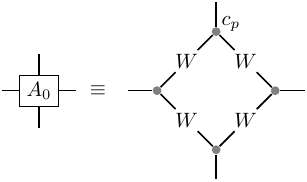}\quad,
\end{align}
where the 1NN matrix $W$ is
\begin{align}
    \label{eq:model:W1NN}
    W=
    \begin{pmatrix}
    W_{00}=1 & W_{01}=1  \\
    W_{10}=1 & W_{11}=0
    \end{pmatrix},
\end{align}
and the 3-leg COPY-dot~\cite{Biamonte:2011,Hauru:2018,Akimenko:2019,Akimenko:2023} $c_p$ carries the absolute value of the activity $z$, with only two non-vanishing components:
\begin{align}
    \label{eq:model:copydot}
    \includegraphics[scale=1.0, valign=c]{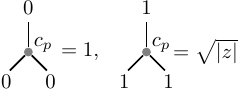}\quad.
\end{align}
\end{subequations}
The zero entry of the 1NN matrix in Eq.~\eqref{eq:model:W1NN},  $W_{11}=0$, means that no two particles can be adjacent to each other, while the COPY-dot encodes the particle number on the lattice site where it sits.
Later, we will show that it is more natural to put the activity on the COPY-dot $c_p$ than on the 1NN matrix $W$.
In Eq.~\eqref{eq:model:initialTN}, the bond matrix $\sigma_0$ is diagonal and carries the sign of the activity $z$:
\begin{align}
    \label{eq:model:sigma0}
    (\sigma_0)_{00}=1, (\sigma_0)_{11} = \sign{z}.
\end{align}
Therefore, when $z>0$, the bond matrix becomes trivial: $\sigma_0=\mathbb{1}$;
when $z<0$, the bond matrix is the diagonal Pauli matrix $\sigma_0 = \sigma^z = \diag(1, -1)$.
For $z=0$, the value of $(\sigma_0)_{11}$ does not affect the partition function in Eq.~\eqref{eq:model:initialTN} and we can set $(\sigma_0)_{11}=1$, the same as $z>0$ case.
In this tensor-network representation of the partition function, the number of lattice sites $N$ in the square lattice in Fig.~\ref{fig:model-chessboard}(a) is twice that of the tensor $A_0$ in Eq.~\eqref{eq:model:initialTN}: $N = 2L_x L_y$.

One can derive the tensor network representation in Eq.~\eqref{eq:model:initialTN} according to the following procedure.
On the lattice of the hard-square model, put copies of the 1NN matrix $W$ in Eq.~\eqref{eq:model:W1NN} on the bond and copies of a $4$-leg COPY-dot $c$ on the lattice site;
the full contraction of the resultant tensor network represents the partition function in Eq.~\eqref{eq:model:Zdef}:
\begin{align}
    \label{eq:model:wc2z}
    \mathcal{Z}_N(z) =
    \includegraphics[scale=1.0, valign=c]{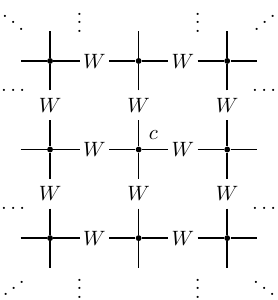}\quad,
\end{align}
where the 4-leg COPY-dot $c$ has only two non-vanishing components:
\begin{align}
    \label{eq:model:4legcdot}
    \includegraphics[scale=1.0, valign=c]{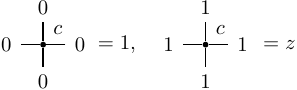}.
\end{align}
The physical meaning of the tensor network in Eq.~\eqref{eq:model:wc2z} is straightforward.
The 1NN matrix $W$ encodes the nearest-neighbor exclusion.
The 4-leg COPY-dot in Eq.~\eqref{eq:model:4legcdot} says that if no particle sits on the site, the weight is 1, while if one particle occupies the site, the weight is activity $z$.
At this point, one can see that it is more natural to put the activity on the $4$-leg COPY-dot $c$ than the $W$ matrix, since copies of $c$ sit on the lattice sites where particles also sit.
The tensor network in Eq.~\eqref{eq:model:wc2z} can be put into the form in Eq.~\eqref{eq:model:initialTN} by splitting the 4-leg COPY-dot diagonally in two ways\footnote{This splitting is closely related to the SVD-splitting in the TRG, where the $\sigma_0$ matrix is absorbed into one $c_p$ tensor.
}:
\begin{align}
    \label{eq:model:c2cp}
    \includegraphics[scale=1.0, valign=c]{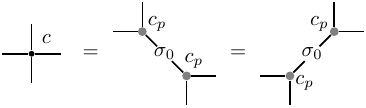},
\end{align}
where $c_p$ and $\sigma_0$ are given in Eqs.~\eqref{eq:model:copydot} and~\eqref{eq:model:sigma0}.
After applying the first splitting to one sublattice and the second splitting to the other and combining $W$ and $c_p$ to become the 4-leg tensor $A_0$ in Eq.~\eqref{eq:model:A0}, one arrives at the tensor network in Eq.~\eqref{eq:model:initialTN}, $45^\circ$ rotated.

Both the lattice symmetries and the $\mathcal{PT}$ symmetry of the hard-square model have a simple manifestation in the tensor network representation of its partition function in Eq.~\eqref{eq:model:initialTN}.
The $90^\circ$ lattice-rotation symmetry around the central cross point in Fig.~\ref{fig:model-zpground} is represented by the following symmetry of the initial tensor $A_0$ in Eq.~\eqref{eq:model:A0}:
\begin{align}
    \label{eq:model:A0rotsym}
    \includegraphics[scale=1.0, valign=c]{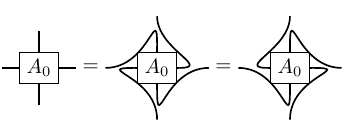}\quad.
\end{align}
This rotation symmetry of the initial tensor $A_0$ is a special case of a more general definition of the rotation symmetry of a coarse-grained tensor in Eq.~\eqref{eq:scheme:strongsym}, as will be discussed later in Sec.~\ref{sec:TNRG:defsym}.
The lattice-reflection symmetry along the $l_1$ and $l_2$ axes becomes the following diagonal reflection symmetry of the tensor $A_0$ in Eq.~\eqref{eq:model:A0}:
\begin{align}
    \label{eq:model:A0reflsym}
    \includegraphics[scale=1.0, valign=c]{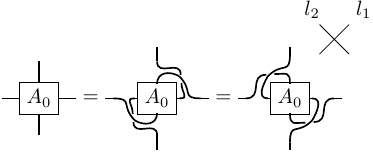}\quad.
\end{align}
This lattice-reflection symmetry along the $l_1$ and $l_2$ axes remains true even for a coarse-grained tensor, as will be shown later in Sec.~\ref{sec:TNRG:defsym}.
At infinite activity $z=+\infty$, only two configurations contribute to the partition function:
\begin{align}
    \label{eq:model:A0ground}
    \includegraphics[scale=1.0, valign=c]{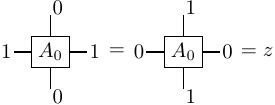}.
\end{align}
These two configurations of the tensor are mapped to each other by a $90^\circ$ rotation of the tensor or a reflection along either the $l_1$ or the $l_2$ axis.

The $\mathcal{PT}$ symmetry of the model is manifested in this tensor-network representation as the fact that the tensors $A_0,\sigma_0$ are real-valued, $A_0= A_0^*, \sigma_0 = \sigma_0^*$:
\begin{align}
    \label{eq:model:A0real}
    \includegraphics[scale=1.0, valign=c]{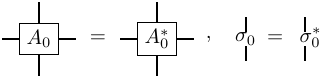}\quad,
\end{align}
where the complex conjugation is the time-reversal operator $\mathcal{T}$, while the parity operator $\mathcal{P}$ can be understood as the identity matrix acting on the tensor legs.
To see why Eq.~\eqref{eq:model:A0real} is a proper definition of $\mathcal{PT}$ symmetry, notice that the transfer matrix of the model $T$ can be built from the tensors $A_0,\sigma_0$ according to
\begin{align}
    \label{eq:model:A02TM}
    \includegraphics[scale=0.85, valign=c]{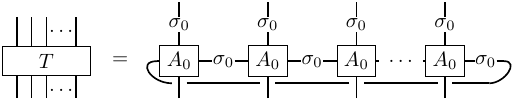}~.
\end{align}
Therefore, the transfer matrix $T$ is real-valued due to Eq.~\eqref{eq:model:A0real}.
It is straightforward to preserve and impose the $\mathcal{PT}$ symmetry of the tensors in TNRG by making sure that all tensors are real-valued in every tensor manipulation.

    We want to point out that the initial tensor-network representation in Eq.~\eqref{eq:model:initialTN} is different from the usual form of the tensor network in most existing TNRG schemes where the bond matrix is absent.
    The representation in Eq.~\eqref{eq:model:initialTN} is a generalized form where a nontrivial diagonal matrix on the bond can be taken into consideration.
    This kind of bond matrix appears naturally when one wants to exploit both the lattice-rotation and $\mathcal{PT}$ symmetries in the tensor-network representation.

\subsection{Stability of the spontaneous-symmetry-breaking phase under RG maps\label{sec:model:numdemo}}
In Sec.~\ref{sec:model:symPhase}, we point out the relevant symmetries related to the two phase transitions of the 1NN hard-square model.
The positive-$z$ transition is accompanied by SSB of lattice symmetries, as is illustrated in Fig.~\ref{fig:model-zpground}.
For the negative-$z$ transition (repulsive-core singularity), the $\mathcal{PT}$ symmetry is spontaneously broken (see Fig.~\ref{fig:model-TMspectz}).
Incorporating these symmetries in an RG transformation is important for studying the RG flows of the model, without which the RG fixed point corresponding to the ordered phase becomes unstable.
Before proposing a full-fledged TNRG scheme that incorporates all the relevant symmetries, we will conduct a series of numerical experiments to demonstrate the behavior of the RG flows generated by existing TNRG maps with distinct symmetry properties.
The numerical results here can be reproduced using the {\tt Python} codes published at Ref.~\cite{Lyu:algo:lattGas}.

To visualize the tensor RG flows, we define the following single-number characterization of the phase, known as a \emph{degeneracy index}~\cite{Gu:Wen:2009}:
\begin{align}
    \label{eq:model:degIndDef}
    X \equiv
    \frac{(\sum_k |\lambda_k|)^2}{\sum_k |\lambda_k|^2},
\end{align}
where $\{\lambda_k\}$ is the eigenvalue spectrum of the transfer matrix constructed from the coarse-grained 4-leg tensor $A$ and bond matrix $\sigma$ according to Eq.~\eqref{eq:model:A02TM} and Appendix~\ref{app:buildTM}.
In the thermodynamic limit, this number $X$ is the degeneracy of the eigenvalues of the transfer matrix with the largest absolute value.

\begin{figure*}[tb]
\subfloat[The RG flows generated by the HOTRG]{
    \includegraphics[scale=0.8,
    valign=c]{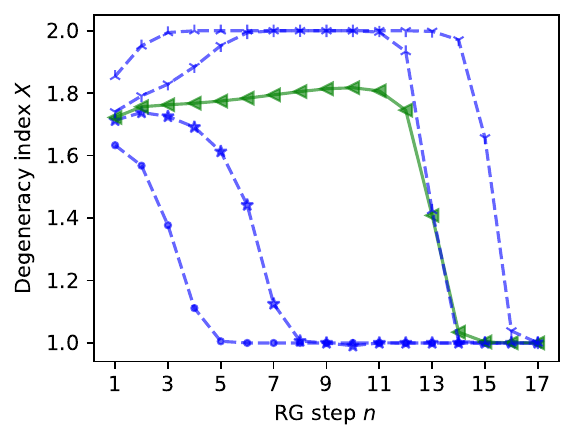}
}
~
\subfloat[The RG flows generated by the TRG]{
    \includegraphics[scale=0.8,
    valign=c]{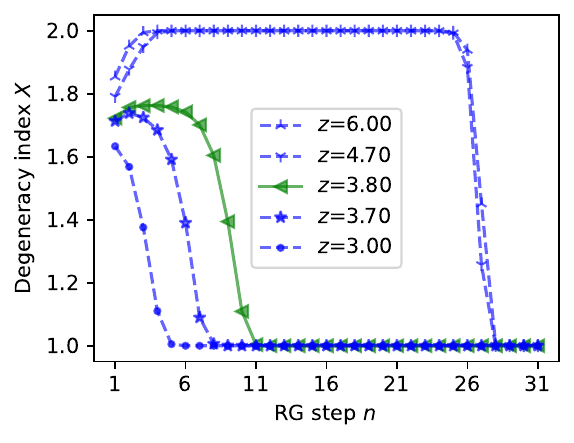}
}
\caption{\label{fig:model-zplusRGflows}
The RG flows of the degeneracy index $X$ near the positive-$z$ transition point $z_c^+$ generated by (a) the HOTRG and (b) the TRG at bond dimension $\chi=10$.
The solid line in the middle is the RG flow closest to the transition point $z_c^{+} \approx 3.80$.
Two lines below have $z < z_c^{+}$ and belong to the trivial phase, while two lines above have $z > z_c^+$ and belong to the SSB phase.
For the flows generated by the HOTRG starting with $z>z_c^+$, the tensor stays at the SSB fixed point ($X=2$) until the RG step $n\approx 11$.
The corresponding flows generated by the TRG are more stable, and the tensor starts to move away from the SSB fixed point after $n\approx 26$ RG steps.
    }
\end{figure*}

For the positive-$z$ transition, the relevant symmetries are lattice symmetries, specifically the rotation and reflection symmetries in Eqs.~\eqref{eq:model:A0rotsym} and~\eqref{eq:model:A0reflsym}.
The bond matrix $\sigma_0=1$ is trivial in Eq.~\eqref{eq:model:initialTN} so all the existing TNRG methods for a square-lattice tensor network are applicable.
We choose to compare the tensor RG flows generated by the TRG~\cite{Levin:Nave:2007} and the higher-order tensor renormalization group (HOTRG)~\cite{Xie:2012:hotrg}.
Although neither of the schemes incorporates lattice symmetries explicitly, they have quite distinct symmetry features.
The TRG splits the 4-leg tensor $A$ in Eq.~\eqref{eq:model:initialTN} diagonally using SVD, in a manner similar to Eq.~\eqref{eq:model:c2cp}.
Although the lattice symmetries of the tensor are not explicitly exploited, they are expected to be inherited in these SVD splittings\footnote{The precise meaning of this statement will become clear when we study the implications of lattice symmetries in TRG later in Sec.~\ref{sec:TNRG:symSVD}.
}.
The only source of the perturbations that break the lattice symmetries is due to machine precision in numerical calculations.
However, in the HOTRG, coarse grainings in $x$ and $y$ directions are performed in series, and the arbitrary choice of the order of the coarse graining explicitly breaks the lattice rotation symmetry, as well as the lattice reflection symmetries along $l_1,l_2$ axes (see Fig.~\ref{fig:model-zpground} and Eq.~\eqref{eq:model:A0reflsym}).
Therefore, it is expected that the HOTRG should break the lattice symmetry more severely than the TRG and that the RG flows generated by the HOTRG near the ordered-phase fixed point should be less stable than those generated by the TRG.
In Fig.~\ref{fig:model-zplusRGflows}, this expectation is checked numerically by plotting the RG flows of the degeneracy index $X$.

\begin{figure*}[tb]
\subfloat[A complex tensor-network representation]{
    \includegraphics[scale=0.8,
    valign=c]{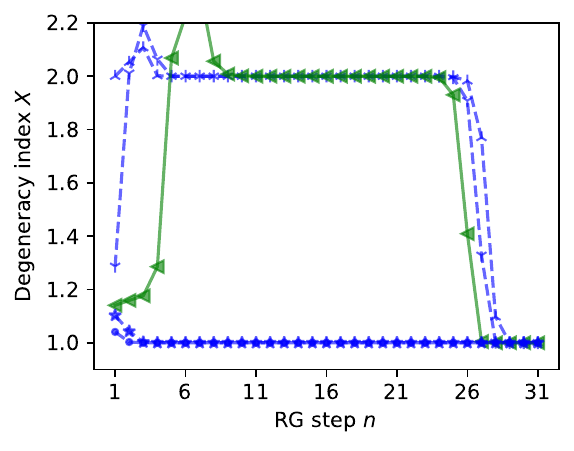}
}
~
\subfloat[A real tensor-network representation]{
    \includegraphics[scale=0.8,
    valign=c]{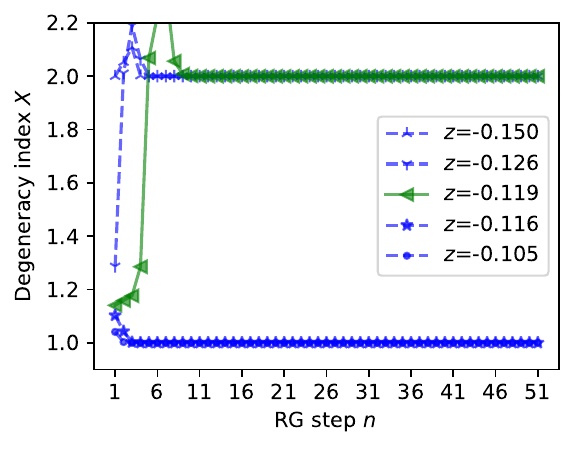}
}
\caption{\label{fig:model-znegRGflows}
The RG flows of the degeneracy index $X$ near the negative-$z$ transition point $z_c^-$ generated by the TRG at bond dimension $\chi=10$.
The solid line in the middle is the RG flow closest to the repulsive-core singularity $z_c^- \approx -0.119$.
Two lines below have $z>z_c^-$ and belong to the trivial phase, while two lines above have $z<z_c^-$ and belong to the SSB phase.
(a) The RG flow starts with a complex tensor-network representation of the model.
Perturbation due to machine-precision error breaks the $\mathcal{PT}$ symmetry and thus makes the SSB fixed point ($X=2$) unstable after $n\approx 26$ RG steps.
(b) The RG flow starts with a real tensor-network representation of the model.
Since the SVD splittings in the TRG make sure all tensors remain real, the $\mathcal{PT}$ symmetry is exactly preserved.
Therefore, the SSB fixed point ($X=2$) should be strictly stable.
Although the figure only shows up to the RG step $n=51$, we have checked that this stability persists even after $n=100$ RG steps.
    }
\end{figure*}

As for the negative-$z$ transition, the relevant symmetry is the $\mathcal{PT}$ symmetry that is manifested in the tensor network as the fact that the tensors are real-valued (see Eq.~\eqref{eq:model:A0real}).
Most TNRG schemes are not applicable to a tensor network with a nontrivial bond matrix $\sigma_0$ in Eq.~\eqref{eq:model:initialTN}.
However, this can be easily resolved after the bond matrix $\sigma_0$ is absorbed into the 4-leg tensor $A$ through its two legs pointing to the positive $x,y$ directions.
Since the SVD splits a real-valued matrix into real-valued pieces, the $\mathcal{PT}$ symmetry is strictly preserved in the TRG as long as the tensors in the initial tensor network are real-valued;
no machine-precision error can break this symmetry since the data type in numerical calculations remains real floating-point numbers.
Therefore, the SSB fixed point is strictly stable when the TRG is applied to the real tensor-network representation of the model in Eqs.~\eqref{eq:model:initialTN} and~\eqref{eqsub:model:A0}.
The numerical evidence of this is shown in Fig.~\ref{fig:model-znegRGflows}(b).

This will be contrasted with the RG flows when the TRG is applied to a complex tensor-network representation of the model.
The complex representation can be obtained by taking the square root of $\sigma_0$ for $z<0$: $\sqrt{\sigma_0} = \diag(1, i)$, and then acting each one of them on the $c_p$ tensor in Eq.~\eqref{eq:model:c2cp}.
The resultant square-lattice tensor network has no bond matrix and is composed of a different 4-leg tensor $\tilde{A}$ that is obtained by acting $\sqrt{\sigma_0}$ on the four legs of $A$ in Eq.~\eqref{eq:model:A0}.
The expectation is that errors due to machine precision should introduce perturbations in this complex presentation, making the SSB fixed point unstable.
We check this expectation numerically in Fig.~\ref{fig:model-znegRGflows}(a), where the SSB fixed point becomes unstable after $n\approx 26$ RG steps.

\section{Exploit the lattice and $\mathcal{PT}$ symmetries in TNRG with loop optimization\label{sec:TNRG}}
We propose a 2D TNRG transformation where lattice-reflection, lattice-rotation and $\mathcal{PT}$ symmetries are exploited and strictly imposed.
The proposed scheme is based on Levin and Nave's tensor renormalization group (TRG)~\cite{Levin:Nave:2007} and the loop optimization in the loop-TNR~\cite{loop-TNR:2017}.
The essential ingredients of our scheme are (i) writing down a proper definition of the symmetries for the coarse-grained tensor network and (ii) revealing the implication of the symmetries in the singular value decomposition (SVD) splitting step of the TRG.
Incorporating the loop optimization becomes straightforward after these two ingredients are ready.
We will also prove that both symmetries are preserved under the proposed RG transformation.
Our scheme is a generalization of the existing symmetric version of the loop-TNR~\cite{loop-TNR:2017}.


\subsection{Definition of the lattice and $\mathcal{PT}$ symmetries for the coarse-grained tensor network\label{sec:TNRG:defsym}}
We claim that the coarse-grained tensor network representation of the partition function has the following form:
\begin{align}
    \label{eq:scheme:initialTN}
    \mathcal{Z} = 
    \includegraphics[scale=0.9, valign=c]{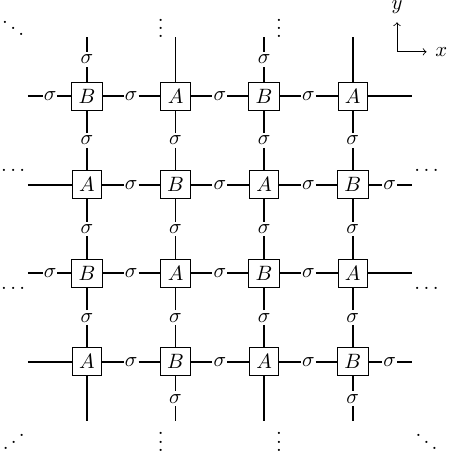},
\end{align}
where the bond matrix $\sigma$ is diagonal with its diagonal entries to be either $+1$ or $-1$.
In this coarse-grained tensor network, the definition of $\mathcal{PT}$ symmetry takes the same form as the initial tensor network of the hard-square model---all tensors in the tensor network are real-valued, $A=A^*, B=B^*, \sigma=\sigma^*$:
\begin{align}
    \label{eq:scheme:ABreal}
    \includegraphics[scale=1.0, valign=c]{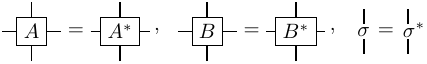},
\end{align}
where the complex conjugate implements the time-reversal operator $\mathcal{T}$ and the parity operator $\mathcal{P}$ is an identity matrix acting on each leg of a tensor in this real representation.

This tensor network representation has both the lattice-reflection and lattice-rotation symmetries.
These two symmetries can be represented by certain symmetry properties of the tensors $A$ and $B$; 
the symmetry properties have two suitable forms, one of which is stronger than the other.

\subsubsection{Weak form of the lattice symmetries}
The weak form is expressed in terms of how the tensor $B$ can be determined by several lattice symmetry operations of the tensor $A$:
\begin{align}
    \label{eq:scheme:weaksym}
    \includegraphics[scale=1.0, valign=c]{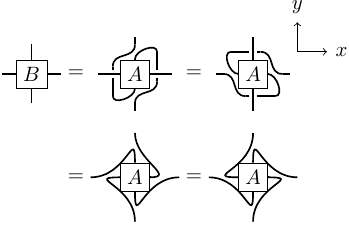}.
\end{align}
The first two equal signs are lattice reflections along the $x$ and $y$ axes, while the second two are $90^\circ$ lattice rotations counterclockwise and clockwise.
Notice that the above weak form in Eq.~\eqref{eq:scheme:weaksym} implies the following diagonal reflection symmetry of the tensor $A$ along $l_1$ and $l_2$ axes:
\begin{align}
    \label{eq:scheme:Arefll1l2sym}
    \includegraphics[scale=1.0, valign=c]{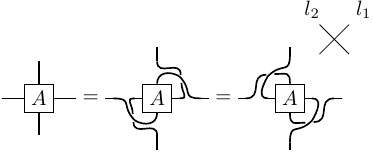},
\end{align}
and the same is true for the tensor $B$.
We will show in Secs.~\ref{sec:TNRG:symTRG} and~\ref{sec:TNRG:loopEF} that this weak form is satisfied by all 4-leg tensors during the coarser graining of the proposed RG transformation, including the 4-leg tensors in the intermediate step.

\subsubsection{Strong form of the lattice symmetries}
The strong form of the two lattice symmetries is represented by the following symmetry of the tensor $A$ itself, which involves a diagonal SWAP-gauge matrix $g$~\cite{Lyu:Kawashima:2025:reflsym}:
\begin{subequations}
\label{eq:scheme:strongsym}
\begin{align}
    \label{eq:scheme:reflstrong}
    \includegraphics[scale=1.0, valign=c]{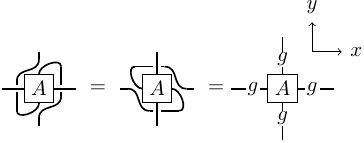}
\end{align}
for lattice reflections along the $x$ and $y$ axes,
\begin{align}
    \label{eq:scheme:rotstrong}
    \includegraphics[scale=1.0, valign=c]{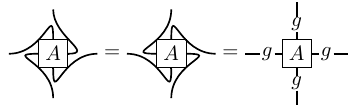}
\end{align}
for $90^\circ$ lattice rotation counterclockwise and clockwise, and
\begin{align}
    \label{eq:scheme:ABstrong}
    \includegraphics[scale=1.0, valign=c]{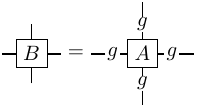}
\end{align}
for the relationship between tensor $B$ and tensor $A$.
\end{subequations}
In Eqs.~\eqref{eq:scheme:reflstrong} to~\eqref{eq:scheme:ABstrong}, just like the bond matrix $\sigma$, the SWAP-gauge matrix $g$ is also diagonal, with its diagonal entries to be either $+1$ or $-1$.
The initial tensor of the 1NN hard-square model satisfies this strong form with the SWAP-gauge matrix being identity.
We will explain how to determine the SWAP-gauge matrix $g$ for a coarse-grained tensor in Sec.~\ref{sec:TNRG:symSVD}.
Notice that the strong form in Eq.~\eqref{eq:scheme:strongsym} implies the weak form in Eq.~\eqref{eq:scheme:weaksym}.
We will see in Sec.~\ref{sec:TNRG:symTRG} that this strong form is satisfied only by the input and output tensors of the proposed RG transformation, but not necessarily the 4-leg tensors in the intermediate step.

\subsection{Symmetric SVD splitting\label{sec:TNRG:symSVD}}
In 2D, there are two basic TNRG schemes for performing the simple coarse graining: tensor renormalization group (TRG)~\cite{Levin:Nave:2007} and higher-order tensor renormalization group (HOTRG)~\cite{Xie:2012:hotrg}.
We choose the TRG since it has lower computational costs and a more isotropic way of deforming the tensor network, making it easier to incorporate the lattice symmetries.

In order to reveal the consequence of the lattice symmetries in the TRG, we propose a \emph{symmetric singular value decomposition (SVD) splitting} to replace the usual SVD splitting in the TRG.
We focus on the splitting of the tensor $A$ along the $l_1$ axis in Eq.~\eqref{eq:scheme:Arefll1l2sym}.
Due to the lattice-reflection symmetry along the $l_1$ axis, as is shown in the first equal sign in Eq.~\eqref{eq:scheme:Arefll1l2sym}, the tensor $A$ can be seen as a symmetric matrix along the $l_1$ axis.
According to the spectrum theorem in linear algebra, we can perform the following truncated eigendecomposition (ED):
\begin{subequations}
\begin{align}
    \label{eq:scheme:EDA}
    \includegraphics[scale=1.0, valign=c]{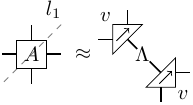},
\end{align}
where the diagonal matrix $\Lambda$ contains the eigenvalues, while the 3-leg tensor $v$ contains the eigenvectors.
The arrow in the diagram of $v$ indicates the order of the two legs of $v$.
Numerical truncations can occur in this step by ordering the eigenvectors according to the absolute value of their eigenvalues and keeping only the first $\chi$ largest ones in the decomposition.
Then, we define the sign of the eigenvalues as the new bond matrix $\sigma' = \sign{\Lambda}$ and distribute the absolute value of the eigenvalues to the two copies of the 3-leg tensor $v$:
\begin{align}
    \label{eq:scheme:weightsplit}
    \includegraphics[scale=1.0, valign=c]{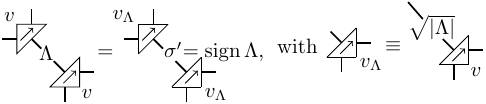},
\end{align}
where $\sqrt{|\Lambda|}$ means first taking the element-wise absolute value and then the element-wise square root.
We call the resultant truncated decomposition in Eqs.~\eqref{eq:scheme:EDA} and~\eqref{eq:scheme:weightsplit} a \emph{symmetric SVD splitting}:
\begin{align}
    \label{eq:scheme:symSVD}
    \includegraphics[scale=1.0, valign=c]{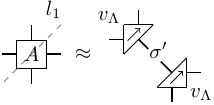}.
\end{align}
\end{subequations}

\begin{remark}
The symmetric SVD splitting in Eq.~\eqref{eq:scheme:symSVD} is closely related to the usual SVD splitting when the tensor satisfies the diagonal reflection symmetry in Eq.~\eqref{eq:scheme:Arefll1l2sym} because of the close relationship between the SVD and the ED for a real symmetric matrix.
We ignore the truncations for now.
It is straightforward to see that the element-wise absolute value of the diagonal matrix $|\Lambda|$ in the ED of Eq.~\eqref{eq:scheme:EDA} contains all the singular values in the SVD of the tensor $A$ along the same axis.
Meanwhile, in the SVD, the sign of the diagonal matrix $\sigma'=\sign{\Lambda}$ will be absorbed into either one of the copies of the tensor $v$ in Eq.~\eqref{eq:scheme:EDA} to give one orthogonal matrix and the other copy is the second orthogonal matrix.
Due to this direct correspondence, the truncations in the splitting happen in the same way for both the symmetric and usual SVD splitting.
\end{remark}

\begin{remark}
The $\mathcal{PT}$ symmetry is automatically preserved and imposed in the symmetric SVD splitting in Eq.~\eqref{eq:scheme:symSVD} for a real-valued tensor $A$ since both $v_\Lambda$ and $\sigma'$ are real-valued due to the spectrum theorem of a real symmetric matrix.
In numerical calculations, the eigenvalue solver for a real symmetric matrix should be used in the truncated ED in Eq.~\eqref{eq:scheme:EDA}.
\end{remark}

\begin{remark}
    The bond matrix $\sigma'$ arises naturally in the splitting of the tensor when the lattice symmetries are exploited without letting complex numbers appear.
\end{remark}

\begin{theorem}[Symmetry of the 3-leg tensor $v_{\Lambda}$ in symmetric SVD splitting]
\label{theo:vLsym}
When the tensor $A$ satisfies the diagonal reflection symmetry along both the $l_1$ and $l_2$ axes in Eq.~\eqref{eq:scheme:Arefll1l2sym}, the tensor $v_{\Lambda}$ in its symmetric SVD splitting along the $l_1$ axis in Eq.~\eqref{eq:scheme:symSVD} can be chosen to satisfy the following symmetry property:
\begin{align}
    \label{eq:scheme:vLsym}
    \includegraphics[scale=1.0, valign=c]{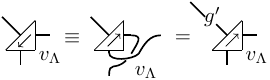},
\end{align}
where the first equal sign comes from the definition of the arrow in the diagrammatic representation of $v$: a change of the direction of arrow corresponds to a transposition of the two corresponding tensor legs.
The matrix $g'$, known as a SWAP-gauge matrix~\cite{Lyu:Kawashima:2025:reflsym}, is diagonal, with its diagonal entries being either $+1$ or $-1$.
\end{theorem}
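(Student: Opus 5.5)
The plan is to turn the two diagonal reflection symmetries in Eq.~\eqref{eq:scheme:Arefll1l2sym} into two commuting linear-algebra facts about one real matrix, and then diagonalize them simultaneously. I read the claimed property in Eq.~\eqref{eq:scheme:vLsym} as follows: transposing the two legs of $v_\Lambda$ (reversing the arrow) equals $v_\Lambda$ with a diagonal $\pm1$ matrix $g'$ acting on its new third leg. The steps are below.

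\emph{Step 1: the matrix form of $A$.} Group the legs of $A$ into the two pairs separated by the $l_1$ axis. This makes $A$ a real matrix $M_{(ab),(cd)}$ of size $\chi^2\times\chi^2$. The $l_1$ reflection in Eq.~\eqref{eq:scheme:Arefll1l2sym} gives $M=M^T$; this is what licenses the real eigendecomposition in Eq.~\eqref{eq:scheme:EDA}.

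\emph{Step 2: the $l_2$ reflection as a swap.} Let $S$ be the SWAP permutation on the pair space, $S\,e_{(ab)}=e_{(ba)}$. It is real, symmetric, and satisfies $S^2=1$. The $l_2$ axis is perpendicular to $l_1$, so reflecting along $l_2$ keeps each leg-pair on its own side of $l_1$ but exchanges the two legs within each pair. The second identity in Eq.~\eqref{eq:scheme:Arefll1l2sym} should therefore read $SMS=M$, possibly with a transpose; the transpose is harmless because $M=M^T$. Hence $[M,S]=0$.

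I expect this step to be the main place to be careful. One must check leg by leg in the diagrams that the $l_2$ reflection maps the $l_1$-bipartition to itself and permutes legs only within each pair. It must not mix them with the arrow convention in a way that produces $S_1 M S_2$ with two different swaps. If the diagrams turn out to give $S_1\neq S_2$, I would first try to absorb the difference into the definition of the arrow on $v$, so that both sides again use the same swap.

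\emph{Step 3: simultaneous diagonalization.} Since $M$ and $S$ are commuting real symmetric matrices, each eigenspace of $M$ is $S$-invariant. Inside each eigenspace, $S$ is a real symmetric involution, so we can pick a real orthonormal basis of $S$-eigenvectors with eigenvalues $\pm1$. Collect these as the columns $v_k$ of the 3-leg tensor $v$. Then $S v_k = g'_k v_k$ with $g'_k=\pm1$, and in diagrams $S v_k$ is exactly $v$ with its arrow reversed. Multiplying column $k$ by $\sqrt{|\Lambda_k|}$ to form $v_\Lambda$ commutes with this relation because $g'$ is diagonal. This gives Eq.~\eqref{eq:scheme:vLsym} with $g'=\diag(g'_k)$. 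The construction stays real throughout, so the $\mathcal{PT}$ symmetry is kept.

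\emph{Step 4: truncation.} Keeping the $\chi$ eigenvectors with the largest $|\Lambda_k|$ keeps a subset of basis vectors, and each kept vector satisfies the relation individually, so the property survives. The only subtlety is a degenerate multiplet, or a $\pm\lambda$ pair, cut at the boundary. Even then each retained basis vector is still an $S$-eigenvector, so the statement holds; truncation choices affect only accuracy, not the symmetry.
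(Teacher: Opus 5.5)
Your proposal is correct and follows the paper's argument: the $l_2$ reflection makes the $l_1$-matricization of $A$ commute with the SWAP operator, so the eigenvectors can be chosen as SWAP eigenvectors with eigenvalues $\pm1$, and the diagonal factor $\sqrt{|\Lambda|}$ commutes with $g'$. Your extra care about choosing a real basis inside each eigenspace and about truncation fills in details the paper leaves implicit.
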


\begin{proof}
    We only need to show that the 3-leg tensor $v$ in Eq.~\eqref{eq:scheme:EDA} satisfies the above symmetry property in Eq.~\eqref{eq:scheme:vLsym} because $v_{\Lambda}$ only differs by a diagonal matrix $\sqrt{|\Lambda|}$ acting on the diagonal leg of $v$ (see Eq.~\eqref{eq:scheme:weightsplit}) and two diagonal matrices commute with each other.
To this end, recall that the tensor $v$, with its diagonal leg taking a particular value, is an eigenvector of the tensor $A$ when $A$ is treated as a matrix across the $l_1$ axis.
Due to the diagonal reflection symmetry along the $l_2$ axis in Eq.~\eqref{eq:scheme:Arefll1l2sym}, this matrix commutes with the SWAP operator that implements the transposition of two legs of the tensor~\cite{Lyu:Kawashima:2025:reflsym}.
This implies that the set of eigenvectors of $A$ in the ED of Eq.~\eqref{eq:scheme:EDA} can be chosen to be the eigenvectors of the SWAP operator.
Since the SWAP operator squares to the identity operator, its eigenvalues can be either $+1$ or $-1$.
The SWAP-gauge matrix $g'$ in Eq.~\eqref{eq:scheme:vLsym} encodes eigenvalues of the SWAP operator; therefore, $g'$ is diagonal, with the diagonal entries being either $+1$ or $-1$.
\end{proof}

\begin{remark}
Only the weak form of the lattice symmetries in Eq.~\eqref{eq:scheme:weaksym} is needed for performing the symmetric SVD splitting in Eq.~\eqref{eq:scheme:symSVD} and for the symmetry property of the 3-leg tensor $v_{\Lambda}$ in Eq.~\eqref{eq:scheme:vLsym}.
\end{remark}

\subsection{TRG with the lattice and $\mathcal{PT}$ symmetries\label{sec:TNRG:symTRG}}
We are ready to demonstrate how to perform the TRG where both lattice and $\mathcal{PT}$ symmetries are preserved and imposed.
The proposed symmetric TRG consists of three steps.
The first step is fixing a representation for the tensor network of the partition function in Eq.~\eqref{eq:scheme:initialTN} using a \emph{rotation trick}.
The second step is the symmetric SVD splitting of the tensors into 3-leg and new bond tensors.
The third step is the recombination of 3-leg tensors into new 4-leg tensors.

\emph{Step 1: the rotation trick. ---}
Due to lattice symmetries, there is certain freedom in the tensor network representation in Eq.~\eqref{eq:scheme:initialTN}.
For example, the tensor $B$ can be obtained through several symmetry operations of the tensor $A$ according to Eq.~\eqref{eq:scheme:weaksym}.
We will choose a representation convenient for exploiting lattice symmetries.
Use $A_{\frac{\pi}{2}}, A_{\pi}, A_{\frac{3\pi}{2}}$ to denote counterclockwise $\pi/2, \pi, 3\pi/2$ rotation of the 4-leg tensor $A$:
\begin{align}
    \label{eq:scheme:Arotdef}
    \includegraphics[scale=0.9, valign=c]{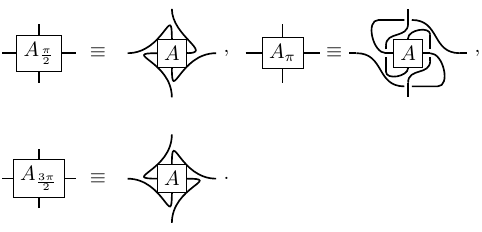}
\end{align}
Due to the weak form of the lattice symmetries in Eq.~\eqref{eq:scheme:weaksym}, it is easy to see that $B = A_{\frac{\pi}{2}} = A_{\frac{3\pi}{2}}$ and $A = A_{\pi}$.
Then, we use copies of the tensor $A$ and its rotation to rewrite the partition function Eq.~\eqref{eq:scheme:initialTN} in the following form:
\begin{align}
    \label{eq:scheme:fixTNrep}
    \mathcal{Z} = 
    \includegraphics[scale=1.0, valign=c]{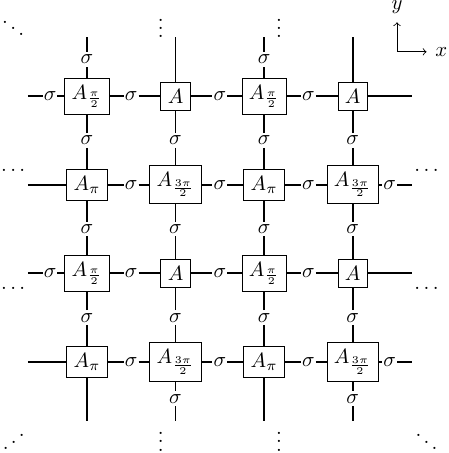}.
\end{align}

\emph{Step 2: the symmetric SVD splitting. ---}
Apply the symmetric SVD splitting in Eq.~\eqref{eq:scheme:symSVD} to the tensor $A$ and its rotations in Eq.~\eqref{eq:scheme:fixTNrep}.
The resultant tensor network becomes
\begin{align}
    \label{eq:scheme:SVDsplitTN}
    \includegraphics[scale=1.0, valign=c]{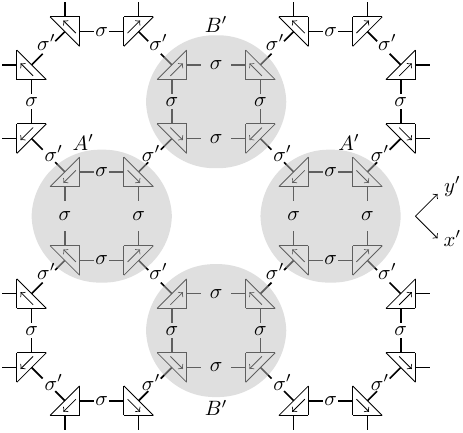}.
\end{align}

\emph{Step 3: recombination. ---}
Define the new tensors $A',B'$ by contracting copies of the 3-leg tensor $v_{\Lambda}$ and old bond matrix $\sigma$ as is indicated in Eq.~\eqref{eq:scheme:SVDsplitTN}.
The new tensor network consists of $A', B'$ and $\sigma'$, and is $45^\circ$ tilted; the axes $x',y'$ are chosen according to the indication in Eq.~\eqref{eq:scheme:SVDsplitTN}.
The new tensors $A'$ and $B'$ satisfy the weak form of the lattice symmetries in Eq.~\eqref{eq:scheme:weaksym} automatically, thanks to the rotation trick in Step 1.
Moreover, the symmetry of the tensor $v_{\Lambda}$ in Theorem~\ref{theo:vLsym} also indicates that tensors $A'$ and $B'$ also satisfy the strong form of the lattice symmetries in Eq.~\eqref{eq:scheme:strongsym}.

\begin{remark}
    We want to point out that Step 1 and Step 2 of the proposed symmetric TRG only require the weak form of the lattice symmetries in Eq.~\eqref{eq:scheme:weaksym}.
    The resultant tensor network after Step 3 has the strong form of the lattice symmetries in Eq.~\eqref{eq:scheme:strongsym}.
\end{remark}

\begin{remark}
    The strong form of the lattice symmetries is not only preserved but also imposed in the proposed symmetric TRG.
    To see this, first notice that the weak form of the lattice symmetries is imposed since it is satisfied by $A',B'$ by the very structure of the tensor network in Eq.~\eqref{eq:scheme:SVDsplitTN}.
    Since the symmetry of the 3-leg tensor $v_{\Lambda}$ in Theorem~\ref{theo:vLsym} only needs the weak form of the lattice symmetries, it follows that the strong form is also imposed.
\end{remark}

\begin{remark}
    The TRG transformation has an RG rescaling factor $b = \sqrt{2}$.
    In this paper, we regard two TRG transformations as a single RG step with $b=2$:
    \begin{align}
    \label{eq:scheme:oneRGsimple}
        A,B,\sigma \xrightarrow[]{\text{TRG}}
        A',B',\sigma' \xrightarrow[]{\text{TRG}}
        A'',B'',\sigma''.
    \end{align}
    We call $A,B,\sigma$ original tensors, $A'',B'',\sigma'$ coarse-grained tensors, and $A',B',\sigma'$ intermediate tensors.
\end{remark}

\begin{remark}
    The proposed symmetric TRG is equivalent to the usual TRG if the machine precision in numerical calculation can be ignored.
    Without diving into the detailed proof, we sketch the reason below.
    The tensor network in Eq.~\eqref{eq:scheme:initialTN} can be made to consist of only one type of tensor $A_{\text{uni}}$ by properly applying the bond matrix $\sigma$ and SWAP-gauge matrix $g$ to the tensor $A$ (see Appendix~\ref{app:buildTM}).
    Afterwards, the usual TRG can be applied, where the 3-leg tensor in the SVD splitting can be determined by applying $\sigma,g$, and $\sigma'$ to $v_{\Lambda}$ in the symmetric SVD splitting.
    Therefore, the truncations in the TRG happen in the same way as in the proposed symmetric TRG.
\end{remark}


\subsection{Incorporating the loop optimization\label{sec:TNRG:loopEF}}
After demonstrating how to perform the simple coarse graining using the symmetric TRG, we now present one way to incorporate entanglement filtering (EF) into one RG step in Eq.~\eqref{eq:scheme:oneRGsimple}.
There are several suitable EF schemes available in 2D~\cite{loop-TNR:2017,Hauru:2018,Evenbly:2018:fet}; 
all of them need certain generalizations to deal with the tensor network in Eq.~\eqref{eq:scheme:initialTN}, where lattice symmetries are present and there is a nontrivial bond matrix.
We choose the loop optimization idea in the loop-TNR~\cite{loop-TNR:2017} since it is easiest to generalize for incorporating lattice symmetries in a tensor network with bond matrices\footnote{In our attempt to generalize the schemes in Ref.~\cite{Hauru:2018,Evenbly:2018:fet}, it seems unavoidable to modify the bond matrices in the optimization.
Intuitively, it feels unnatural to change bond matrices in the optimization since they carry the sign of eigenvalues of some tensor and have a rigid structure.
}.

Using the corner-double line (CDL) tensor~\cite{Gu:Wen:2009} as the input of the simple RG transformation in Eq.~\eqref{eq:scheme:oneRGsimple}, one can determine the location of the redundant entanglement in the tensor network in Eq.~\eqref{eq:scheme:initialTN}, which is indicated as thick-stroke loops in the following diagram:
\begin{align}
    \label{eq:scheme:looploc}
    \includegraphics[scale=1.0, valign=c]{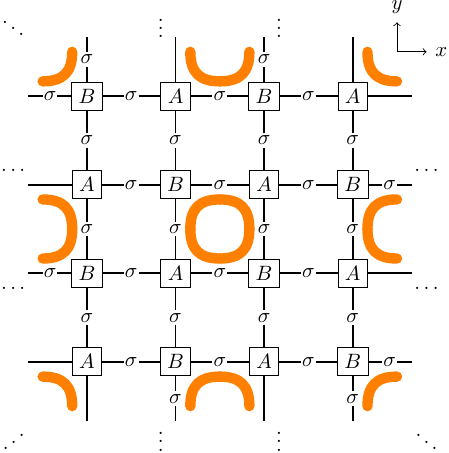}.
\end{align}

These loops can be filtered by incorporating an additional \emph{loop optimization} step between Step 2 and Step 3 of the first TRG in the RG step.
By comparing Eq.~\eqref{eq:scheme:fixTNrep} and Eq.~\eqref{eq:scheme:SVDsplitTN}, we write down the approximation of this loop optimization:
\begin{align}
    \label{eq:scheme:loopopt}
    \includegraphics[scale=1.0, valign=c]{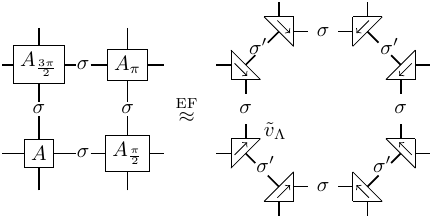},
\end{align}
where $\sigma'$ is the same as the bond matrix in Eq.~\eqref{eq:scheme:SVDsplitTN} and Eq.~\eqref{eq:scheme:symSVD}, and the 3-leg tensor $\tilde{v}_{\Lambda}$ is initialized as $v_{\Lambda}$ in the symmetric SVD splitting in Eq.~\eqref{eq:scheme:symSVD}.
In the loop optimization in Eq.~\eqref{eq:scheme:loopopt}, the tensors $A, \sigma, \sigma'$ remain fixed while the 3-leg tensor $\tilde{v}_{\Lambda}$ is varied to make the tensor network on the right-hand side a good approximation to that on the left-hand side. 
The optimization method developed in Refs.~\cite{loop-TNR:2017,Evenbly:2018:fet,Lyu:Kawashima:2025:reflsym} is adopted to determine the optimal $\tilde{v}_{\Lambda}$ iteratively.
A self-contained review of this strategy and its subtleties can be found in Appendix~\ref{app:loopopt}.
After determining the optimal $\tilde{v}_{\Lambda}$, new tensors $A', B'$ are obtained by contracted copies of $\tilde{v}_{\Lambda}$ and $\sigma$ as is indicated in Eq.~\eqref{eq:scheme:SVDsplitTN}.
In the second half of the RG transformation, the symmetric TRG can be applied as usual.
Therefore, after incorporating the loop optimization, a single RG step of the proposed method is
\begin{align}
    \label{eq:scheme:oneRGloop}
    A,B,\sigma \xrightarrow[+\text{loop-opt}]{\text{TRG}}
    A',B',\sigma' \xrightarrow[]{\text{TRG}}
    A'',B'',\sigma''.
\end{align}

\begin{remark}
    Although an additional step is added, there is still only one approximation in the first half of the RG step.
    This is because we can treat the symmetric SVD splitting as the initialization of the loop optimization.
    Therefore, the only approximation in the first half of the RG step is the loop approximation in Eq.~\eqref{eq:scheme:loopopt}.
\end{remark}

\begin{remark}
    After incorporating the loop optimization, the 3-leg tensor $\tilde{v}_{\Lambda}$ no longer has the symmetry in Eq.~\eqref{eq:scheme:vLsym}, which is a property of the symmetric SVD splitting.
    Therefore, the tensor network consisting of copies of tensors $A',B',\sigma'$ does not have the strong form of the lattice symmetries in Eq.~\eqref{eq:scheme:strongsym}.
    However, the symmetric TRG can still be applied since the tensor network still has the weak form of the lattice symmetries.
    The final coarse-grained tensor network consisting of copies of tensors $A'',B'', \sigma'$ has the strong form of the lattice symmetries.
    The $\mathcal{PT}$ symmetry is also preserved and imposed since tensors in every step of the RG transformation remain real-valued.
\end{remark}


\subsection{Relationship with the symmetric loop-TNR\label{sec:TNRG:relationLTNR}}
The proposed symmetric TRG with loop optimization is a generalization of the symmetric version of the (slightly modified) loop-TNR briefly mentioned in an appendix of Ref.~\cite{loop-TNR:2017}.
The proposed scheme skips another entanglement filtering step in loop-TNR, where a change of basis is made on the bond of the tensor network with no approximation involved.
The main purpose of this step in loop-TNR is to filter out the CDL tensors.
However, we discover that the loop optimization in Eq.~\eqref{eq:scheme:loopopt} can reliably filter out the CDL tensors when certain subtleties are taken care of regarding the optimization method of the 3-leg tensor $\tilde{v}_{\Lambda}$ (see Appendix~\ref{app:loopopt}).
Furthermore, the loop-TNR applies loop optimization for every TRG transformation, while the proposed RG only performs loop optimization for every two TRG steps since it is the minimal amount needed to simplify the CDL tensors completely.

In the symmetric loop-TNR~\cite{loop-TNR:2017}, the new bond matrix $\sigma'$ in the loop optimization in Eq.~\eqref{eq:scheme:loopopt} is assumed to be trivial.
In our proposed method, we allow nontrivial $\sigma'$ with $-1$ in its diagonal entries.
Recall that the form of the approximation in Eq.~\eqref{eq:scheme:loopopt} is a natural consequence of the symmetric SVD splitting in Eq.~\eqref{eq:scheme:symSVD}.
The proposed method ensures that the symmetric SVD splitting provides a good initialization of the 3-leg tensor $\tilde{v}_{\Lambda}$ for the loop optimization in Eq.~\eqref{eq:scheme:loopopt}, whose approximation error is controlled by that of the symmetric SVD splitting.
With $\sigma'=1$ in the loop-TNR, the initial 3-leg tensor obtained from the usual SVD splitting might not give a good initialization for the loop optimization when the coarse-grained tensors of a model have negative eigenvalues in the ED across a diagonal axis (see Eq.~\eqref{eq:scheme:EDA}).

Here are pieces of numerical evidence supporting the claim above.
For the 2D Ising model, all bond matrices $\sigma,\sigma',\sigma''$ in a single RG step in Eq.~\eqref{eq:scheme:oneRGloop} are the identity matrix, which explains why the symmetric loop-TNR works well when the 2D Ising model is used as the benchmark model.
However, for both the positive-$z$ and negative-$z$ transition points of the 1NN hard-square model, the bond matrix $\sigma'$ in Eq.~\eqref{eq:scheme:oneRGloop} is nontrivial and contains $-1$ in its diagonal entries.
Therefore, we conjecture that the loop approximation in the symmetric loop-TNR in Ref.~\cite{loop-TNR:2017} might not work well on this model.

\section{Numerical results\label{sec:numres}}
As was explored in Sec.~\ref{sec:model:symPhase} and~\ref{sec:model:numdemo}, the symmetries that are spontaneously broken are the lattice symmetries for the positive-$z$ transition and the $\mathcal{PT}$ symmetry for the negative-$z$ transition.
Moreover, these symmetries are responsible for the stability of the RG fixed point corresponding to the phase that breaks the symmetry spontaneously (the SSB phase).
Since the proposed TNRG transformation in Sec.~\ref{sec:TNRG} incorporates the lattice and the $\mathcal{PT}$ symmetries, the fixed points of the SSB phases for both the positive $z>z_c^+$ and negative $z<z_c^-$ become strictly stable\footnote{
We have checked this numerically.
}.
We use the RG flows of the degeneracy index $X$ defined in Eq.~\eqref{eq:model:degIndDef} to diagnose the phase of the model at a given activity $z$ (see Figs.~\ref{fig:model-zplusRGflows} and~\ref{fig:model-znegRGflows}).
A bisection method~\cite{Lyu:Xu:Kawashima:2021}, also known as a ``shooting method''~\cite{Ebel:2025:newton}, can then be employed to estimate the critical activity for both transitions.
Afterwards, the tensor RG flow at the estimated activity is generated, and those tensors that are near to the critical fixed points are used to construct transfer matrices and extract scaling dimensions~\cite{Gu:Wen:2009,Ebel:2025:LDO}.
In Appendix~\ref{app:buildTM}, we will explain how to build a transfer matrix for a tensor network shown in Eq.~\eqref{eq:scheme:initialTN}.
The numerical results in this section can be reproduced using the {\tt Python} codes published at Ref.~\cite{Lyu:algo:lattGas}.

\subsection{Ising transition for positive activity\label{sec:numres:posz}}
As has been explained in Sec.~\ref{sec:model:transitions}, the positive-activity transition of the 1NN hard-square model belongs to the 2D Ising universality class.
For the 2D Ising Conformal Field Theory (CFT), the entanglement entropy of the density matrix of a subsystem scales logarithmically with the linear size $L$ of the subsystem: 
\begin{align}
    \label{eq:numres:EEscale}
    S(L) = \frac{c}{3}\log(L), 
\end{align}
where $c$ is the central charge of the CFT~\cite{Holzhey:1994,Vidal:2003,Calabrese:2009}.
According to an argument using this entanglement entropy~\cite{Levin:Nave:2007,Lyu:Kawashima:2024area}, this scaling implies a growth of RG truncation errors with respect to the RG step for any TNRG scheme without the EF.
A successful EF scheme is expected to tame the growth of RG errors at criticality.
As a direct check of whether the loop optimization proposed in Sec.~\ref{sec:TNRG:loopEF} fulfills this expectation, we compare the flow of RG errors of the proposed method with those of the TRG~\cite{Levin:Nave:2007} in Fig.~\ref{fig:zpos-RGerrs}, where the truncation error of the second TRG transformation is used as a measure of the RG error in one RG step.
The loop optimization not only tames the growth but also reduces the RG error significantly;
this implies a boost of the accuracy of the estimates of the critical activity and the scaling dimensions at the criticality.

\begin{figure}[tb]
\centering
\subfloat[The usual TRG (without EF)]{
    \includegraphics[width=1.0\columnwidth]{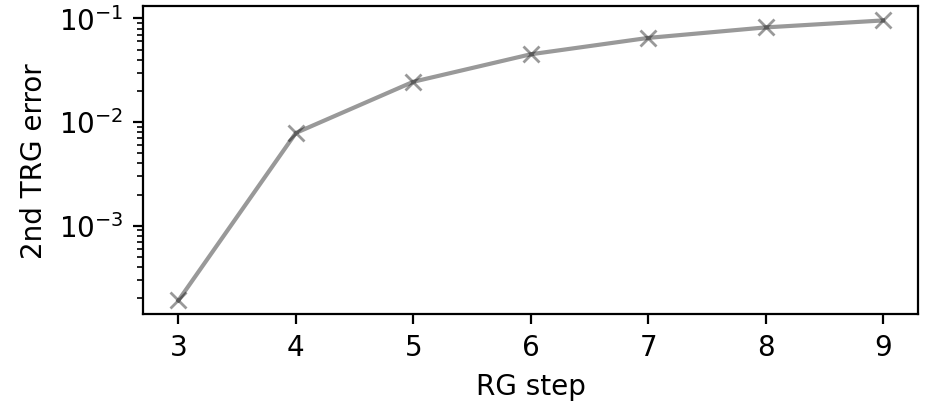}
}
\par
\subfloat[The proposed RG map (with loop optimization)]{
    \includegraphics[width=1.0\columnwidth]{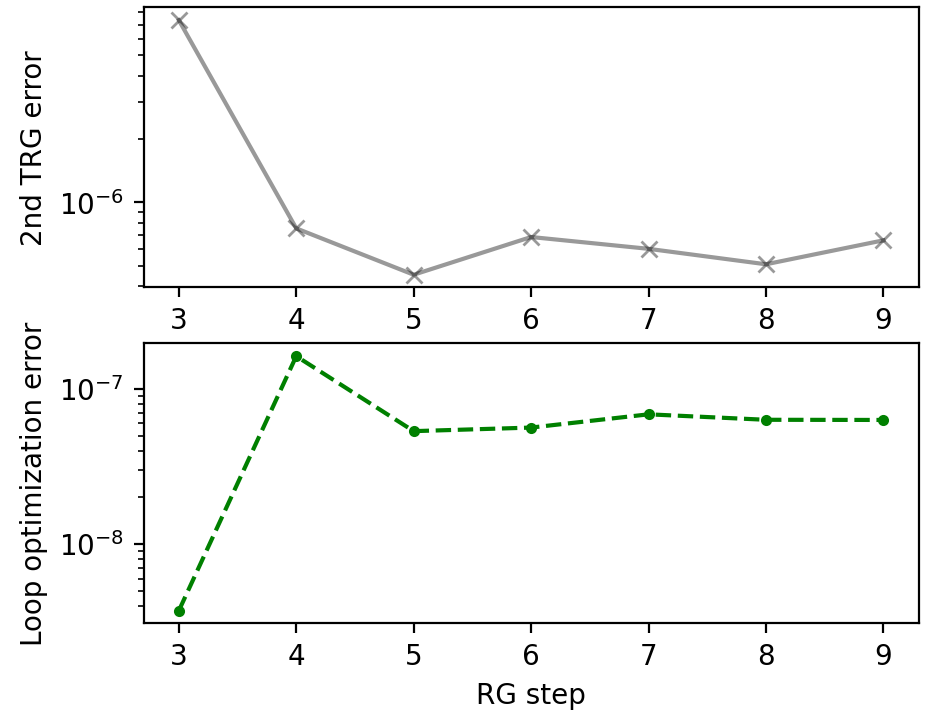}
}
\caption{\label{fig:zpos-RGerrs}
Flows of the RG errors of the TRG and the proposed scheme at the estimated critical activity $z_c^{+}$ for the positive-$z$ transition in Table~\ref{tab:estzpc}.
The bond dimension is $\chi=20$.
The growth of the RG error is tamed by the loop optimization.
(a) Without any EF process, the truncation error of the second TRG in the usual TRG map (a single RG step consists of two TRG transformations) grows to about $10^{-2}$ after four RG steps.
(b) By incorporating the loop optimization between the first and the second TRG (see Eq.~\eqref{eq:scheme:oneRGloop}), the truncation error of the second TRG in the proposed RG map stops growing and converges to less than $10^{-6}$ after about four RG steps.
The loop optimization error $\epsilon_{\text{loop}}^2 = 1 - F$ also converges to less than $10^{-7}$ 
(see Appendix~\ref{app:loopopt} for the definition of this error).
    }
\end{figure}

In Table~\ref{tab:estzpc}, we compare the estimates of $z_c^{+}$ using the proposed scheme and the usual TRG for bond dimensions $6\leq \chi \leq 20$.
For both methods, there is a clear trend of the improvement of the accuracy when the bond dimension $\chi$ increases;
however, the improvement is not monotonic.
At the same bond dimension $\chi$, the relative error of the value estimated by the proposed scheme is usually two orders of magnitude smaller than that estimated by the TRG.
In order to reach the accuracy of the proposed scheme at bond dimension $\chi=10$, the usual TRG needs to have $\chi=50$.

\begin{table}[tb]
\caption{\label{tab:estzpc}%
Estimates of the critical activity $z_c^{+}$ for positive-$z$ transition at various bond dimensions $\chi$ using the proposed scheme and the usual TRG without lattice symmetries.
The relative error is calculated by comparing our estimates to the estimated value $3.79625517391234(4)$ in Ref.~\cite{Guo:Blote:2002}.
}
\begin{ruledtabular}
\begin{tabular}{cll}
$\chi$ & the proposed scheme (error) & the TRG (error)\\
\colrule
6 & 3.810339 ($3 \times 10^{-3}$ ) & 3.7640 ($8 \times 10^{-3}$) \\
10 & 3.796440 ($5 \times 10^{-5}$) & 3.8096 ($2 \times 10^{-3}$) \\
12 & 3.796252 ($7 \times 10^{-7}$) & 3.7912 ($1 \times 10^{-3}$)\\
16 & 3.796232 ($6 \times 10^{-6}$) & 3.7902 ($2 \times 10^{-3}$) \\
18 & 3.796142 ($3 \times 10^{-5}$) & 3.7966 ($8 \times 10^{-5}$) \\
20 & 3.796245 ($3 \times 10^{-6}$) & 3.7976 ($3 \times 10^{-4}$) \\
30 &  & 3.7958 ($1 \times 10^{-4}$)\\
50 &  & 3.79619 ($2 \times 10^{-5}$)
\end{tabular}
\end{ruledtabular}
\end{table}

After the critical activity $z_c^{+}$ is estimated, we can generate a tensor RG flow that approaches the critical fixed point.
When the tensor is near the fixed point, we use two copies of the tensor to build the transfer matrix and extract scaling dimensions.
Similar to the estimates of $z_c^{+}$, we observe a clear trend of the improvement of the accuracy when the bond dimension $\chi$ increases for both methods.
In Fig.~\ref{fig:zpos-scDim}, we show the estimates of the scaling dimension with respect to the RG step using the usual TRG and the proposed scheme, both at their largest bond dimension, as is shown in Table~\ref{tab:estzpc}.
In order to reach a similar accuracy of the proposed method at $\chi=20$, the TRG needs to reach $\chi=50$.
Moreover, the estimates of scaling dimension are stable with respect to the RG step in the proposed method, while those estimated by the TRG drift slowly.

\begin{figure}[tb]
\centering
\subfloat[The usual TRG with $\chi=50$]{
    \includegraphics[width=1.0\columnwidth]{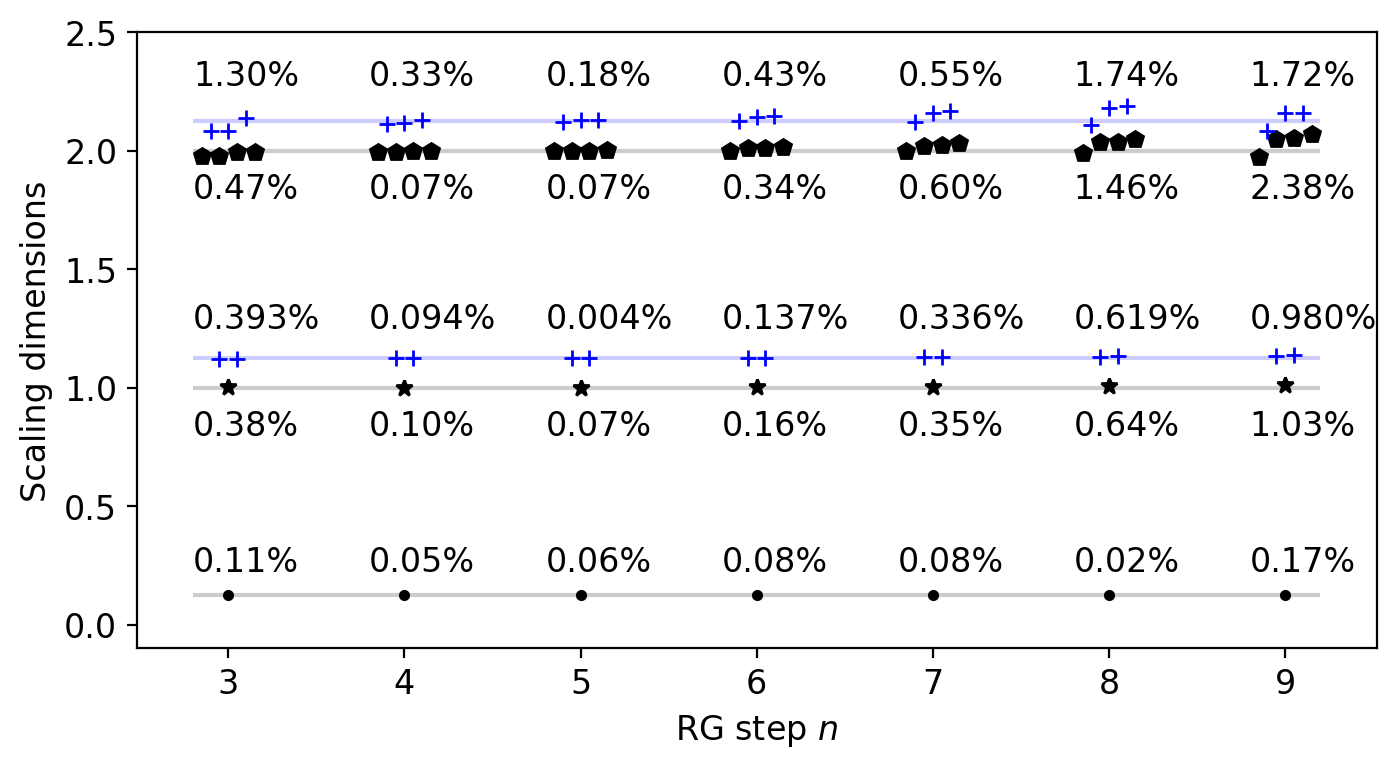}
}
\par
\subfloat[The proposed RG map with $\chi=20$]{
    \includegraphics[width=1.0\columnwidth]{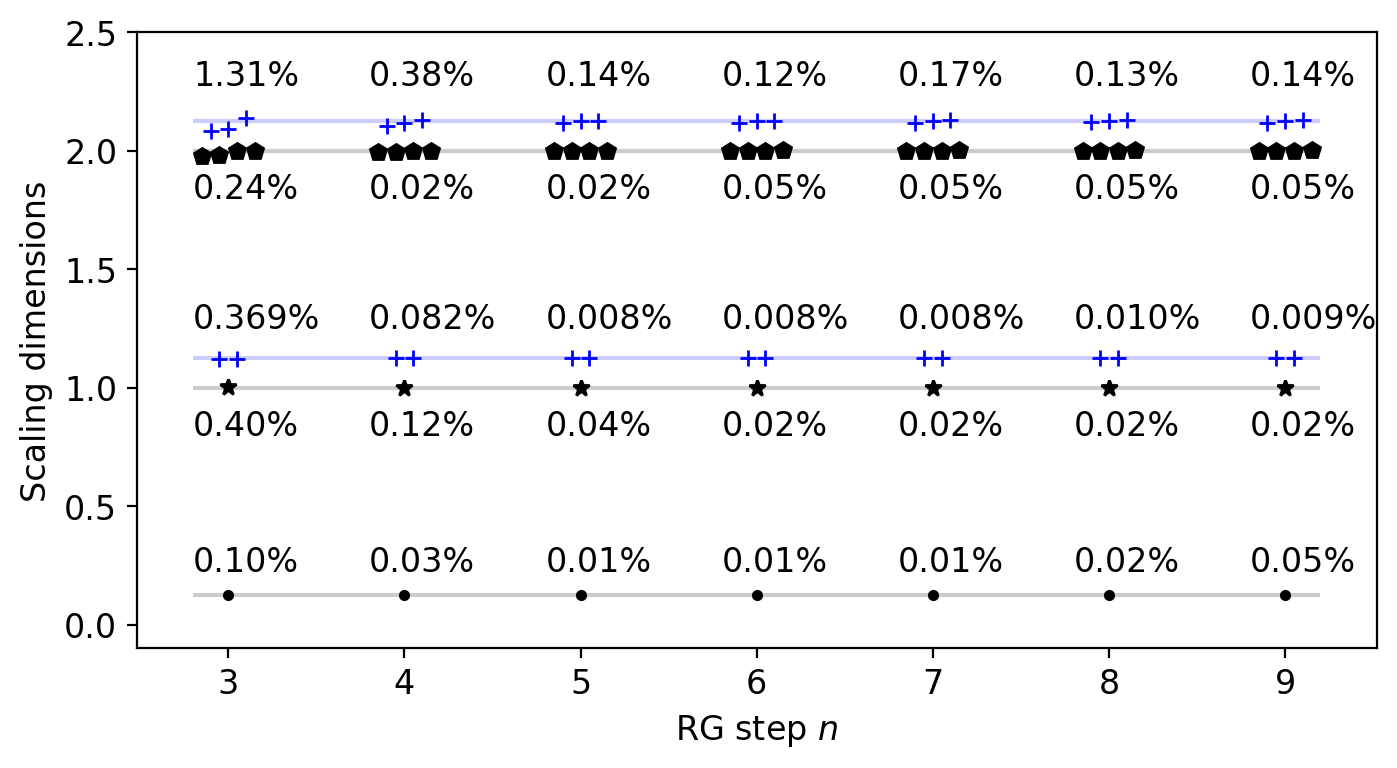}
}

\caption{\label{fig:zpos-scDim}
Estimates of the scaling dimensions at the positive-$z$ criticality using (a) the usual TRG and (b) the proposed method.
The tensor RG flow is generated at the estimated $z_c^{+}$ in Table~\ref{tab:estzpc}.
The horizontal lines are exact values of the 2D Ising CFT, while data points are the numerical estimates.
The percentages near the data points are the relative errors (we take the geometric average for the errors of degenerate scaling dimensions).
    }
\end{figure}

\subsection{Repulsive-core singularity for negative activity\label{sec:numres:negz}}
As has been explained in Sec.~\ref{sec:model:transitions}, the repulsive-core singularity at the negative activity $z_c^{-}$ belongs to the universality class of the Yang-Lee edge singularity, which is the simplest non-unitary CFT with a central charge $c=-22/5<0$.
The standard argument for the validity of the TNRG using the entanglement-entropy area law fails for such non-unitary CFT.
The reason is that the scaling of entanglement entropy in Eq.~\eqref{eq:numres:EEscale} for a non-unitary theory is based on a generalized reduced density matrix built from a combination of right and left ground states~\cite{Shimizu-Kawabata:2025}.
However, the main tensor in TNRG after many RG steps is interpreted as the ground state by evolving the system in the radial direction from the origin~\cite{Levin:Nave:2007,Lyu:Kawashima:2024area}.
This indicates that the reduced density matrix used for the truncations in the TNRG has the usual definition and is always positive semi-definite.
Therefore, we can no longer predict the behavior of the RG truncation errors with respect to the RG step in the TNRG, with or without the EF process.

However, we can still use the same strategy as the positive-$z$ transition in Sec.~\ref{sec:numres:posz} to check the effectiveness of the proposed method numerically.
We plot the RG errors of the
proposed method and the TRG in Fig.~\ref{fig:zneg-RGerrs}.
Similar to the criticality at the positive-$z$ Ising transition in Fig.~\ref{fig:zpos-RGerrs}(a), the RG error of the TRG grows to $10^{-1}$ after seven RG steps.
This RG error is reduced significantly to around $10^{-7}$ after incorporating the loop optimization.
This is a piece of numerical evidence showing that the idea of the EF can be effective for a criticality belonging to a non-unitary CFT, though the entanglement that is filtered in the EF cannot be what is in the area law formula in Eq.~\eqref{eq:numres:EEscale}.

\begin{figure}[tb]
\centering
\subfloat[The usual TRG (without EF)]{
    \includegraphics[width=1.0\columnwidth]{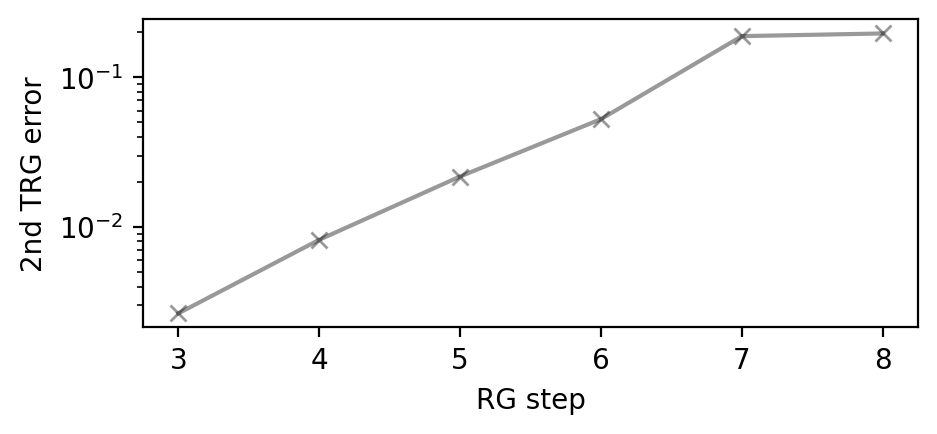}
}
\par
\subfloat[The proposed RG map (with loop optimization)]{
    \includegraphics[width=1.0\columnwidth]{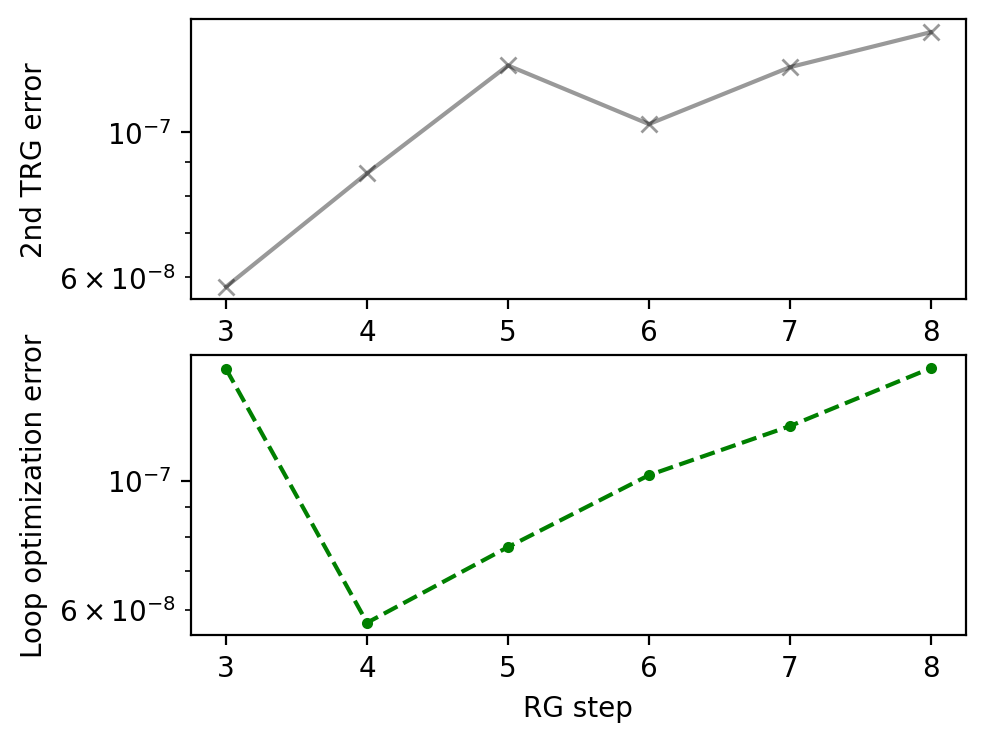}
}
\caption{\label{fig:zneg-RGerrs}
Flows of the RG errors of the TRG and the proposed scheme at the estimated critical activity $z_c^{-}$ for the negative-$z$ transition in Table~\ref{tab:estznc}.
The bond dimension is $\chi=10$.
(a) Without any EF process, from RG step $n=3$ to $n=7$, the truncation error of the second TRG in the usual TRG map (a single RG step consists of two TRG transformations) grows exponentially from $\sim 10^{-3}$ to $\sim 10^{-1}$.
(b) By incorporating the loop optimization between the first and the second TRG (see Eq.~\eqref{eq:scheme:oneRGloop}), the truncation error of the second TRG in the proposed RG map is reduced to $\sim 10^{-7}$ and grows much slower with respect to the RG step.
}
\end{figure}

In Table~\ref{tab:estznc}, we compare the estimates of $z_c^{-}$ using the proposed scheme and the usual TRG for bond dimensions $10\leq \chi \leq 20$.
The relative error of the estimate obtained by the proposed method is usually one to three orders of magnitude smaller than that estimated by the TRG at the same bond dimension $\chi$.
The proposed method with bond dimensions $10 \leq \chi \leq 20$ produces more accurate estimates than the usual TRG at $\chi=50$.
For both methods, the improvement of the estimate accuracy with the bond dimension $\chi$ is not as clear as the positive-$z$ transition in Table~\ref{tab:estzpc}, but the trend is there when the increment of $\chi$ is larger than $10$.

\begin{table}[tb]
\caption{\label{tab:estznc}%
Estimates of the critical activity $z_c^{-}$ for the negative-$z$ transition at various bond dimensions $\chi$ using the proposed scheme and the usual TRG.
The relative error is calculated by comparing our estimates to the estimated value $-0.11933888188(1)$ in Ref.~\cite{Todo:1999}.
}
\begin{ruledtabular}
\begin{tabular}{cll}
$\chi$ & the proposed scheme (error) & the TRG (error)\\
\colrule
10 & $-0.1193391 \,(2 \times 10^{-6})$ & $-0.119314 \,(2 \times 10^{-4})$ \\
12 & $-0.1193372 \,(1 \times 10^{-5})$ & $-0.119358 \,(2 \times 10^{-4})$ \\
16 & $-0.1193384 \,(4 \times 10^{-6})$ & $-0.119362 \,(2 \times 10^{-4})$ \\
20 & $-0.119338886 \,(4 \times 10^{-8})$ & $-0.119341 \,(2 \times 10^{-5})$ \\
30 &  & $-0.119337 \,(2 \times 10^{-5})$\\
50 &  & $-0.1193384 \,(4 \times 10^{-6})$
\end{tabular}
\end{ruledtabular}
\end{table}

The non-unitary CFT that describes the Yang-Lee edge singularity has only one RG-relevant primary field $\varphi$ with a scaling dimension $x_{\varphi}=-2/5$.
For the convenience of plotting the scaling dimensions, we shift all scaling dimensions by $2/5$.
Therefore, the scaling dimensions according to the 2D CFT are 
\begin{align}
    \label{eq:numres:scDznegExp}
    0, 0.4, 1, 1, 2, 2, 2, 2.4, 2.4, \ldots,
\end{align}
where $0$ and $0.4$ are for the primary field $\varphi$ and the identity, the doublet $1$ is for the first descendants $\partial_i \varphi$, the triplet $2$ is for the second descendants $\partial_i \partial_j \varphi$, and the doublet $2.4$ is for the energy-momentum tensor.
In numerical estimation, since we fix the smallest scaling dimension to $0$, the second smallest one reflects the accuracy of the estimate of the primary field $\varphi$.

Although the estimates of the critical activity of the negative-$z$ transition in Table~\ref{tab:estznc} have slightly better accuracy than that of the positive-$z$ transition in Table~\ref{tab:estzpc}, the estimation of the scaling dimensions seems to be more challenging for the negative-$z$ transition.
For the usual TRG with scaling dimensions $\chi \leq 20$, the estimate of the second and third smallest scaling dimensions $0.4$ and $1$ drift away very fast after three RG steps;
the higher ones drift even faster.
This drift becomes slower for larger bond dimensions.
Therefore, in Fig.~\ref{fig:zneg-scDim}(a), we choose to show the RG flow of the scaling dimensions at the largest bond dimension $\chi=50$ in Table~\ref{tab:estznc}.
Incorporating the loop optimization greatly improves the convergence of the scaling dimensions with respect to the RG step and the accuracy (see Fig.~\ref{fig:zneg-scDim}(b)).
Even at bond dimension $\chi=10$, the proposed method produces estimates of the scaling dimension with more accuracy and stability with respect to the RG step than the TRG at bond dimension $\chi=50$.
For bond dimensions $10<\chi \leq 20$, the RG flow of the scaling dimensions produced by the proposed method is similar to the $\chi=10$ results in Fig.~\ref{fig:zneg-scDim}(b), without much improvement of the accuracy.

\begin{figure}[tb]
\centering
\subfloat[The usual TRG with $\chi=50$]{
    \includegraphics[width=1.0\columnwidth]{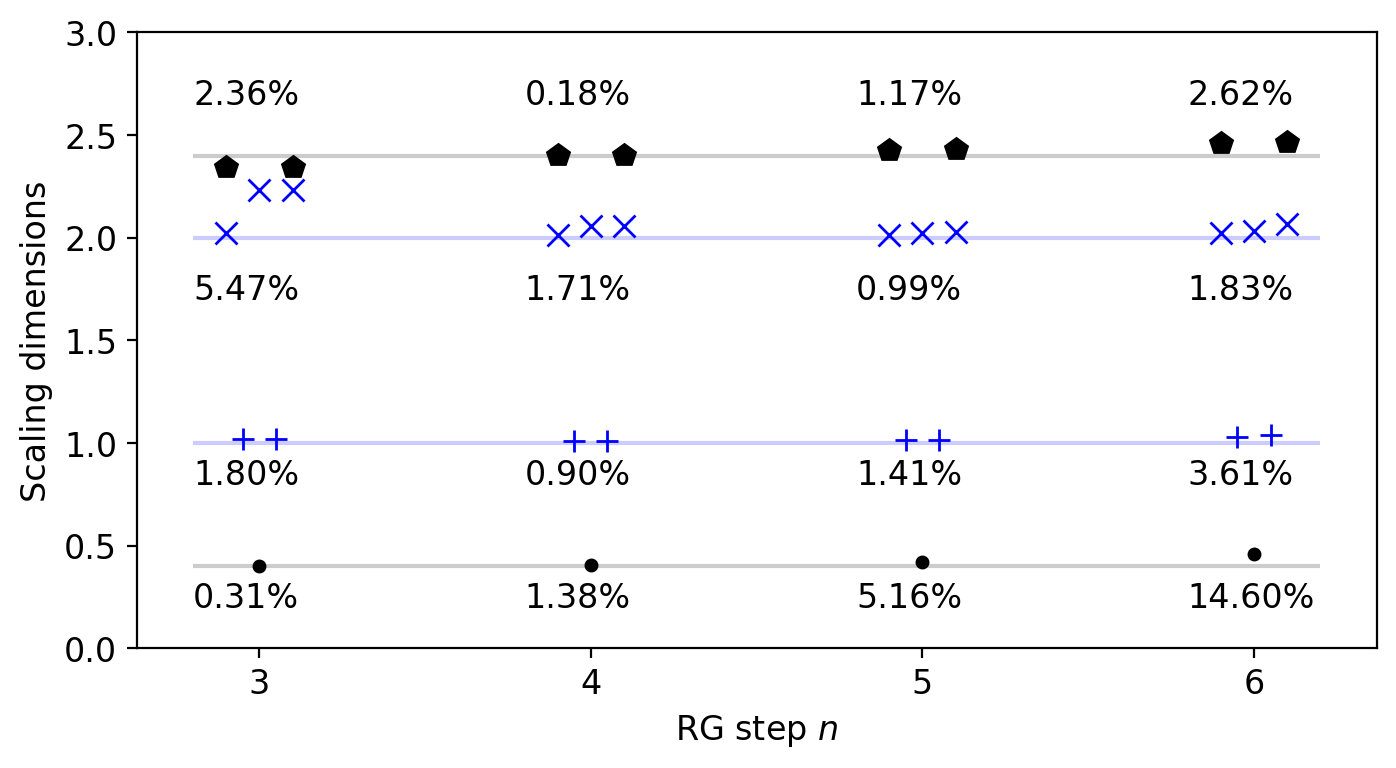}
}
\par
\subfloat[The proposed RG map with $\chi=10$]{
    \includegraphics[width=1.0\columnwidth]{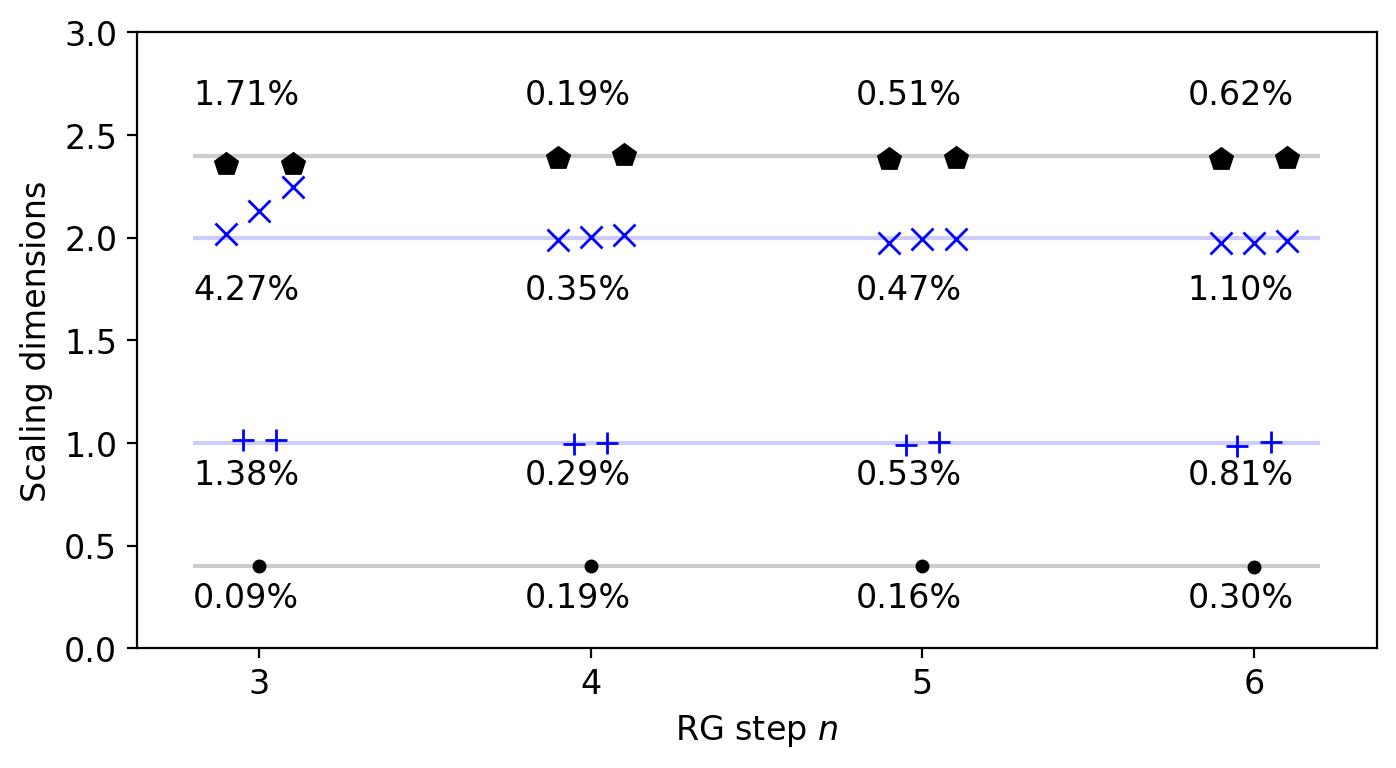}
}

\caption{\label{fig:zneg-scDim}
Estimates of the scaling dimensions at the negative-$z$ criticality using (a) the usual TRG and (b) the proposed method.
The tensor RG flow is generated at the estimated $z_c^{-}$ in Table~\ref{tab:estznc}.
The horizontal lines are exact values listed in Eq.~\eqref{eq:numres:scDznegExp}, while data points are the numerical estimates.
The percentages near the data points are the relative errors (we take the geometric average for the errors of degenerate scaling dimensions).
    }
\end{figure}

\section{Summary and discussion}
In this paper, we have demonstrated how to incorporate lattice-reflection, lattice-rotation, and $\mathcal{PT}$ symmetries in the TNRG.
We point out the importance of incorporating these symmetries in the TNRG for the phase transitions where they are spontaneously broken, as well as the fact that the physical meaning of keeping the tensor real-valued is the preservation of the $\mathcal{PT}$ symmetry of the model.
An EF-enhanced TNRG scheme is proposed where these symmetries are both preserved and imposed.
The scheme proposed in Sec.~\ref{sec:TNRG} works for any model whose tensor-network representation can be chosen to have the lattice and $\mathcal{PT}$ symmetries defined in Sec.~\ref{sec:TNRG:defsym}.
The lattice symmetries are expected to be present for models with interactions independent of the directions.
The $\mathcal{PT}$ symmetry is present for any statistical mechanical model with a real Boltzmann weight, but not necessarily positive.
This paper has made the first step of studying the hard-core lattice gas models using an advanced TNRG scheme enhanced with an EF process.
Combined with the hard-core break method~\cite{Akimenko:2023} for constructing the tensor-network representation of a hard-core lattice gas model, the proposed method could be used for studying such models with larger exclusion ranges.

We emphasize that there has been no EF-enhanced TNRG scheme that is suitable for studying a phase transition where some lattice symmetries are spontaneously broken.
While a simple scheme like the TRG naturally preserves lattice symmetries (if the machine precision can be ignored), all EF-enhanced schemes tend to break lattice symmetries as long as these symmetries are not explicitly incorporated.
The reason is that the EF generally involves an approximation of a non-tree-like tensor network, where the tensors in the tensor network are updated in series; the order of this update has to be chosen as an artifact of the algorithm and thus breaks the lattice symmetries.
Although this might not cause problems for a phase transition like that of the Ising model, it introduces unnecessary RG-relevant perturbations in the phase transitions where some lattice symmetries are broken spontaneously.
These RG-relevant perturbations due to the artifact of the algorithm get amplified with the coarse graining and can lead to incorrect RG flows\footnote{
        Our preliminary numerical experiments indicated that the Gilt-TNR~\cite{Hauru:2018} could not produce satisfactory RG flows near the positive-$z$ transition of the 1NN hard-square model;
        this observation motivated us to study the lattice-rotation symmetry in the TNRG.
}, posing difficulty in finding a critical fixed-point tensor.
As has been pointed out in Sec.~\ref{sec:TNRG:relationLTNR}, the ansatz proposed in the symmetric version of the loop-TNR~\cite{loop-TNR:2017} assumes a trivial bond matrix, which hinders its generality and might cripple its performance for models like hard-core lattice gas.

The scheme introduced in Appendix~\ref{app:loopopt} for determining the 3-leg tensor $\tilde{v}_{\Lambda}$ in the loop optimization in Eq.~\eqref{eq:scheme:loopopt} is still rudimentary.
It is adapted from the method proposed in Refs.~\cite{loop-TNR:2017,Evenbly:2018:fet}, which works best when various 3-leg tensors in the loop are independent of each other.
An ad hoc trick inspired by Ref.~\cite{Evenbly:code} is adopted to make the above method work when lattice symmetries are incorporated and all 3-leg tensors in the loop are the same one.
In our numerical calculations, we find the performance of the proposed updating rule for $\tilde{v}_{\Lambda}$ has potential for further improvement in terms of better convergence and reliability.
For example, the gradient descent, powered by automatic differentiation, can be combined with the proposed updating rule.
The updating rule prefers the low-rank solution but does not guarantee an increase of the fidelity.
The gradient descent ensures the increase of fidelity if possible but does not prefer a low-rank solution.
Therefore, a good way of combination is to first apply a minimal number of iterations of the proposed updating rule and then use the gradient descent to increase the fidelity.
An alternative possibility is adding the nuclear-norm regularization~\cite{Homma:2024} to the cost function of the gradient descent so that the low-rank solution becomes preferred.

\begin{acknowledgments}
The idea of developing a TNRG method with lattice-rotation symmetry was conceived while X.L. was pursuing his PhD at the University of Tokyo.
However, this paper was significantly shaped by Clement Delcamp's suggestion of understanding the spontaneous symmetry breaking in the hard-core lattice gas models using the tensor-network language.
We thank Slava Rychkov and Nikolay Ebel for fruitful discussions about rotation symmetry in tensor network calculations.
We thank Rajeev S. Erramilli and Sridip Pal for commenting about the entanglement-entropy area law for non-unitary conformal field theory.
We also thank Atsushi Ueda for explaining the optimization scheme used in the symmetric version of the loop-TNR.
X.L. is grateful for the beautiful and supportive research environment provided by  Institut des Hautes \'Etudes Scientifiques (IHES).
\end{acknowledgments}

\appendix

\section{Details of the loop optimization\label{app:loopopt}}
If the optimization procedure for the loop optimization in Eq.~\eqref{eq:scheme:loopopt} is carefully designed, the CDL tensors can be reliably filtered out without the additional entanglement filtering step in the loop-TNR proposed in Ref.~\cite{loop-TNR:2017}.

To understand this subtlety, first notice that the symmetric SVD splitting, as well as the usual SVD splitting, is exact for a CDL tensor even when the bond dimension of the diagonal leg is truncated from $\chi^2$ to $\chi$.
This means that the loop optimization in Eq.~\eqref{eq:scheme:loopopt} is also exact for a CDL tensor, with no further optimization needed.
Therefore, a simple gradient descent method will fail to simplify the CDL structure.
However, the optimization method developed in full environment truncation (FET)~\cite{Evenbly:2018:fet,Lyu:Kawashima:2025:reflsym} can reliably detect and filter out the CDL structure once certain subtleties are taken care of in its implementation.

In the framework of the FET, the two tensor networks on both sides of Eq.~\eqref{eq:scheme:loopopt} are treated as two ket vectors $\ket{\Psi}$ and $\ket{\Phi}$ by bending the physical legs of a tensor network towards the left:
\begin{align}
    \label{eq:scheme:psidef}
    \ket{\Psi} \equiv
    \includegraphics[scale=1.0, valign=c]{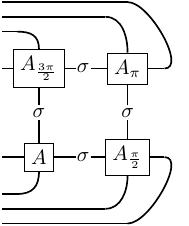},
\end{align}
and similarly for $\ket{\Phi}$.
The corresponding bra vectors $\bra{\Psi}$ and $\bra{\Phi}$ are mirror reflections of the ket diagrams.
The approximation in Eq.~\eqref{eq:scheme:loopopt} is quantified using the fidelity $F(\Psi,\Phi)$ between the two states, which is defined to be
\begin{align}
    \label{eq:scheme:Fdef}
    F(\Psi,\Phi)\equiv
    \frac{\braket{\Phi}{\Psi} \braket{\Psi}{\Phi}}{\braket{\Psi}{\Psi} \braket{\Phi}{\Phi}} \in [0,1].
\end{align}
The approximation error $\epsilon_{\text{loop}}^2$ can be defined to be $\epsilon_{\text{loop}}^2= 1- F$.
The optimal $\tilde{v}_{\Lambda}$ gives the maximal fidelity $F$.
The updating rule for $\tilde{v}_{\Lambda}$ is obtained by treating one copy of $\tilde{v}_{\Lambda}$ in the right-hand side of Eq.~\eqref{eq:scheme:loopopt} to be independent to the remaining copies; then, the fidelity can be rewritten as
\begin{subequations}
   \begin{align}
    \label{eq:scheme:F4vL}
    F(\Psi,\Phi)=
    \frac{\includegraphics[scale=1.0, valign=c]{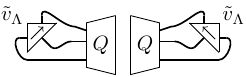}}{\includegraphics[scale=1.0, valign=c]{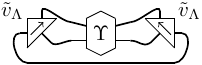}},
    \end{align}
    where the tensors $Q$ and $\Upsilon$ are defined as
    \begin{align}
    \label{eq:scheme:Qpsilondef}
    \includegraphics[scale=0.9, valign=c]{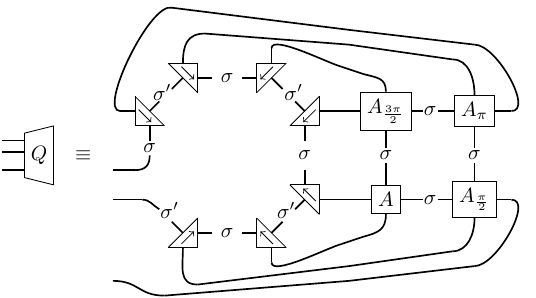},
    \end{align} 
    and
    \begin{align}
    \label{eq:scheme:Upsilondef}
    \includegraphics[scale=0.9, valign=c]{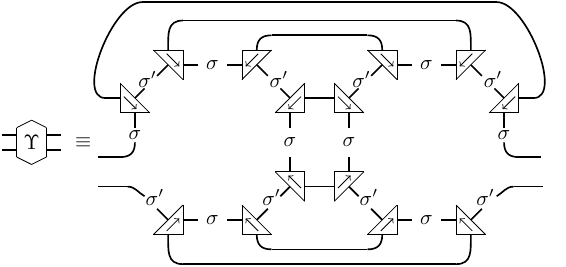}.
    \end{align} 
\end{subequations}
If one pretends that the copies of the tensor $\tilde{v}_{\Lambda}$ in the definition of $Q$ and $\Upsilon$ are different from the $\tilde{v}_{\Lambda}$ explicitly written in Eq.~\eqref{eq:scheme:F4vL}, the maximalization of $F$ in Eq.~\eqref{eq:scheme:F4vL} becomes a generalized eigenvalue problem~\cite{Lyu:Kawashima:2025:reflsym,geneig:2019};
the optimal $\tilde{v}_{\Lambda}$ is determined from tensors $Q, \Upsilon$:
\begin{align}
    \label{eq:scheme:vpropose}
    \includegraphics[scale=1.0, valign=c]{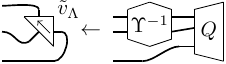}.
\end{align}
This updating rule is closely related to the standard variational MPS method used in the loop-TNR~\cite{loop-TNR:2017}, where $\tilde{v}_{\Lambda}$ is found by solving a linear system involving tensors $\Upsilon$ and $Q$.
Here we choose to use the inverse of $\Upsilon$ to solve this linear system.
Several subtleties need to be taken care of when using the updating rule in Eq.~\eqref{eq:scheme:vpropose}.

The existence of short-range entanglement located in the loop indicates the low-rank nature of the tensor $\Upsilon$ defined in Eq.~\eqref{eq:scheme:Upsilondef} when it is treated as a matrix horizontally~\cite{Hauru:2018}.
Therefore, the inverse of $\Upsilon$ in Eq.~\eqref{eq:scheme:vpropose} should always be implemented as the Moore-Penrose inverse in numerical calculations~\cite{Strang:2023,Lyu:Kawashima:2025:reflsym}.
Once the Moore-Penrose inverse is used, the updating rule in Eq.~\eqref{eq:scheme:vpropose} changes the CDL structure in a similar way to the graph-independent local truncation (Gilt)~\cite{Hauru:2018}.
The ``corner matrix'' in a CDL tensor gets multiplied by copies of itself after one iteration such that the singular value spectrum of the symmetric SVD splitting in the second symmetric TRG (see Eq.~\eqref{eq:scheme:oneRGloop}) has a lower entropy.
Usually, a few iterations using Eq.~\eqref{eq:scheme:vpropose} are enough to filter out a CDL tensor.
Therefore, in the numerical implementation, a minimal number of iterations\footnote{A minimum of 10 iterations works well in our numerical calculations.
} is set for the optimization of $\tilde{v}_{\Lambda}$.

In the derivation of the updating rule in Eq.~\eqref{eq:scheme:vpropose}, we pretend that the tensors $Q, \Upsilon$ do not depend on $\tilde{v}_{\Lambda}$.
If it were so, the updated $\tilde{v}_{\Lambda}$ guarantees the increase of the fidelity.
In reality, since the tensors $Q, \Upsilon$ change when $\tilde{v}_{\Lambda}$ is updated, the fidelity might not increase.
Here, we use a trick~\cite{Lyu:Kawashima:2025:reflsym,Evenbly:code} that makes sure that the fidelity does not decrease during the optimization process:
    \par
    (i) Use Eq.~\eqref{eq:scheme:vpropose} to propose a candidate $\tilde{v}_{\Lambda}'$.
    \par
    (ii) Form some convex combinations of $\tilde{v}_{\Lambda}'$ and the old $\tilde{v}_{\Lambda}$,
    \begin{align}
        \label{eq:scheme:vcombine}
        \tilde{v}_{\Lambda}^{\text{try}} =
        (1-p)\tilde{v}_{\Lambda}' + p\tilde{v}_{\Lambda},
    \end{align}
    where $p$ increases from 0 to 1.
    \par
    (iii) Once a convex combination leads to an increase of the fidelity, the tensor $\tilde{v}_{\Lambda}$ is updated to be this combination.


In Sec.~\ref{sec:numres}, we use two different ways to implement the Moore-Penrose inverse of $\Upsilon$ for the two transitions of the model.
The Moore-Penrose inverse of a symmetric matrix $M$ can be obtained from its eigendecomposition (ED). 
Suppose the ED of the matrix is $M=O\Lambda O^T$, where $O$ is orthogonal and $\Lambda =\diag{(\lambda_1, \lambda_2,\ldots)}$ is diagonal with $|\lambda_1| \geq |\lambda_2| \geq \ldots$.
Our first way to implement the Moore-Penrose inverse $M^{+}$ is using a ``hard inverse'' of the diagonal matrix $\Lambda$ in the ED, $M^+ = O\Lambda^{+} O^T$, with
\begin{align}
    \label{eq:scheme:pinvhard}
    \Lambda^{+} = \diag{(\lambda_1^{-1}, \lambda_2^{-1},\dots,\lambda_k^{-1},0,0,\dots)},
\end{align}
where the first $k$ eigenvalues are inverted and all the remaining ones are set to be zero.
In the numerical calculation of the positive-activity transition in Sec.~\ref{sec:numres:posz}, we find a good choice of $k$ is $k=4\times \floor{\frac{\chi_1}{3}} 
\times \floor{\frac{\chi_2}{3}}$, where $\chi_1$ and $\chi_2$ are bond dimensions of the two legs on the same side of $\Upsilon$ in Eq.~\eqref{eq:scheme:Upsilondef}.
This choice is convenient since it introduces no hyperparameter for the loop optimization.

However, the ``hard-inverse" construction in Eq.~\eqref{eq:scheme:pinvhard} does not work well for negative-activity transition, where the loop optimization often gets stuck with the TRG initialization of $\tilde{v}_{\Lambda}$.
Inspired by Ref.~\cite{Hauru:2018}, we propose a second way to implement the Moore-Penrose inverse $M^{+'}$ using a ``regulated inverse'' of the diagonal matrix $\Lambda$ in the ED, $M^{+'} = O\Lambda^{+'} O^T$, with
\begin{align}
    \label{eq:scheme:pinvsoft}
    (\Lambda^{+'})_k = \frac{1}{\lambda_k + \epsilon_{\text{inv}}},
\end{align}
where $\epsilon_{\text{inv}}$ is a small positive number as the regulator of the inverse.
In the numerical calculation of the negative-activity transition in Sec.~\ref{sec:numres:negz}, we observe that the effectiveness of the updating rule in Eq.~\eqref{eq:scheme:vpropose} is very sensitive to the choice of $\epsilon_{\text{inv}}$.
In Table~\ref{tab:znceps}, we summarize the good choices of $\epsilon_{\text{inv}}$ for various bond dimensions.
    A good choice of $\epsilon_{\text{inv}}$ means that the loop optimization is effective so that the error of the TRG is reduced compared with the usual TRG without the EF and the growth of the error with the RG step becomes slower (see Fig.~\ref{fig:zneg-RGerrs}).

\begin{table}[tb]
\caption{\label{tab:znceps}%
The parameter $\epsilon_{\text{inv}}$ used in the Moore-Penrose ``regulated inverse'' for negative-activity transition in Sec.~\ref{sec:numres:negz}.
}
\begin{ruledtabular}
\begin{tabular}{ccccc}
$\chi$ & 10 &  12 & 16 & 20\\
$\epsilon_{\text{inv}}$ & $1\times 10^{-8}$ & $1\times 10^{-10}$ & $5\times 10^{-11}$ & $6\times 10^{-12}$
\end{tabular}
\end{ruledtabular}
\end{table}

\section{From two sublattices to one uniform lattice\label{app:buildTM}}
In the proposed TNRG in Sec.~\ref{sec:TNRG}, the coarse-grained tensor network representation of the partition function takes the form in Eq.~\eqref{eq:scheme:initialTN}, where two types of tensors $A$ and $B$ occupy two sublattices, respectively.
One simple way to construct the transfer matrix of the partition function in Eq.~\eqref{eq:scheme:initialTN} is using a $2 \times 2$ block of tensors consisting of two copies of $A$, two copies of $B$, and eight copies of the bond matrix $\sigma$.
This way is unavoidable when tensors $A$ and $B$ are unrelated to each other. 
However, once the tensors $A$ and $B$ have the strong form of the lattice symmetries in Eq.~\eqref{eq:scheme:strongsym}, one can choose to make the tensor network in Eq.~\eqref{eq:scheme:initialTN} have just one type of tensor $A_{\text{uni}}$ occupying one lattice uniformly.
Afterwards, copies of a single tensor $A_{\text{uni}}$, instead of a much larger $2 \times 2$ block, can be used to build the transfer matrix.

To map from two sublattices to one, use Eq.~\eqref{eq:scheme:ABstrong} in Eq.~\eqref{eq:scheme:initialTN} to replace all $B$ with $A$;
the resultant tensor network is
\begin{align}
    \label{eq:num:uniTN}
    \includegraphics[scale=0.9, valign=c]{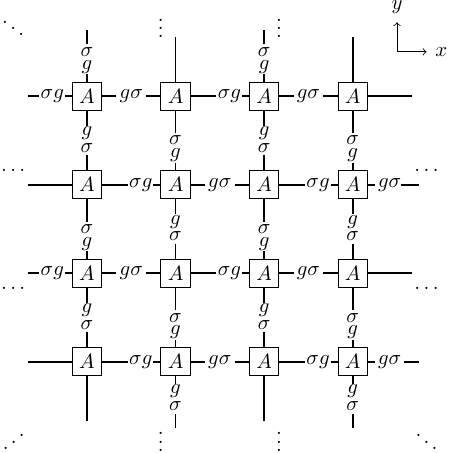},
\end{align}
where two matrices $g$ and $\sigma$ sit on each bond.
These two matrices commute with each other since they are both diagonal.
Define their product to be $\tilde{\sigma} \equiv g \sigma = \sigma g$ and let the tensor $A$ absorb $\tilde{\sigma}$ to be a new tensor $A_{\text{uni}}$:
\begin{align}
    \label{eq:num:A2C}
    \includegraphics[scale=1.0, valign=c]{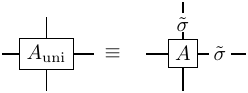}.
\end{align}
The tensor-network representation of the partition function in Eq.~\eqref{eq:scheme:initialTN} becomes a uniform tensor network with just one type of tensor $A_{\text{uni}}$, and the transfer matrix can be built in the usual way using copies of the tensor $A_{\text{uni}}$.


\bibliography{reference}

\end{document}